\newcommand{\colA}[1]{\textcolor{blue}{#1}}   % degrees in A
\newcommand{\colB}[1]{\textcolor{orange}{#1}} % degrees in B
\newcommand{\colfactor}{0.8} % as you already have
\newcolumntype{R}{>{\raggedleft\arraybackslash}p{\colfactor cm}}
\theoremstyle{remark}
\newcommand{\sparagraph}[1]{\vspace{1mm}\noindent {\bf #1}}
\def\strippercent#1\%{#1}
\newcommand{\xboundcell}[1]{%
    \begingroup
    \edef\raw{#1}%
    \edef\val{\expandafter\strippercent\raw}%
    %
    % 0.0%
    \ifdim \val pt = 0pt
        \cellcolor{green!35}#1%
    % < 5%
    \else\ifdim \val pt < 5pt
        \cellcolor{red!10}#1%
    % < 10%
    \else\ifdim \val pt < 10pt
        \cellcolor{red!25}#1%
    % < 15%
    \else\ifdim \val pt < 20pt
        \cellcolor{red!40}#1%
    % >= 15%
    \else
        \cellcolor{red!60}#1%
    \fi\fi\fi\fi
    \endgroup
}
\newcommand{\xbound}{\textsc{xBound}\xspace}
\newcommand{\lpbound}{\textsc{LpBound}}
\newcommand{\safebound}{\textsc{SafeBound}}
\newcommand{\panda}{\texttt{PANDA}}
\newcommand{\stats}{\textsc{STATS-CEB}}
\newcommand{\lpinfty}{\ell_{\infty}}
\newcommand{\linfty}{\ell_{\infty}}
\newcommand{\lninfty}{\ell_{-\infty}}
\newcommand{\ltwo}{\ell_{2}}
\newcommand{\va}{\mathbf{a}}
\newcommand{\vb}{\mathbf{b}}
\newcommand{\Dom}[0]{\texttt{Dom}}
\newcommand{\Attrs}[0]{\texttt{Attrs}}
\newcommand{\Vars}[0]{\texttt{Vars}}
\providecommand{\K}{\mathcal{K}}
\providecommand{\lb}[1]{#1^{-}}
\definecolor{dkgreen}{rgb}{0,0.6,0}
\definecolor{gray}{rgb}{0.5,0.5,0.5}
\definecolor{mauve}{rgb}{0.58,0,0.82}
\definecolor{mygray}{rgb}{0.5, 0.5, 0.5}
\tiny\color{gray},
\ttfamily\color{black}]{<}{>}
\tiny\color{gray},
\newif\ifshowcomments
\definecolor{mscolor}{RGB}{40,120,170}     % Mihail
\definecolor{tbcolor}{RGB}{150,100,40}     % Tiemo
\definecolor{hzcolor}{RGB}{120,70,150}     % Hangdong
\definecolor{jcrcolor}{RGB}{0,140,120}     % Jesús
\definecolor{ytcolor}{RGB}{170,60,60}      % Yuanyuan
\definecolor{akcolor}{RGB}{100,120,40}     % Andreas
\definecolor{todocolor}{RGB}{200,0,0}      % TODOs (red)
  \newcommand{\ms}[1]{\textcolor{mscolor}{[MS: #1]}}
  \newcommand{\tb}[1]{\textcolor{tbcolor}{[TB: #1]}}
  \newcommand{\hz}[1]{\textcolor{hzcolor}{[HZ: #1]}}
  \newcommand{\jcr}[1]{\textcolor{jcrcolor}{[JCR: #1]}}
  \newcommand{\yt}[1]{\textcolor{ytcolor}{[YT: #1]}}
  \newcommand{\ak}[1]{\textcolor{akcolor}{[AK: #1]}}
  \newcommand{\todo}[1]{\textcolor{todocolor}{[TODO: #1]}}
  \newcommand{\ms}[1]{}
  \newcommand{\tb}[1]{}
  \newcommand{\hz}[1]{}
  \newcommand{\jcr}[1]{}
  \newcommand{\yt}[1]{}
  \newcommand{\ak}[1]{}
  \newcommand{\todo}[1]{}
\theoremstyle{plain}
\newtheorem{inequality}{Inequality}
\newcommand{\secref}[1]{\S\ref{#1}}
\begin{document}

% Long title.
\title{
    The Case for Cardinality Lower Bounds
}

% Authors.
\author{Mihail Stoian}
\authornote{Work done at Microsoft GSL, except for \secref{sec:improved-gen-rev-hoelder},
~\secref{subsec:prob-l0},
~\secref{subsec:min-degree},
~\secref{subsec:heavy-partition},
~\secref{subsec:norm-stitching},
\secref{subsubsec:conj-and-disj}, and
~\secref{subsec:pred-and-heavy}.
}
\orcid{0000-0002-8843-3374}
\affiliation{%
  \institution{University of Technology Nuremberg}
  \streetaddress{}
  \city{Nuremberg}
  \country{Germany}
  \postcode{}
}
\email{mihail.stoian@utn.de}

\author{Tiemo Bang}
\affiliation{%
  \institution{Microsoft Gray Systems Lab}
  \city{Barcelona}
  \country{Spain}
}
\email{tiemobang@microsoft.com}

\author{Hangdong Zhao}
\affiliation{%
  \institution{Microsoft Gray Systems Lab}
  \city{Redmond, WA}
  \country{USA}}
\email{hangdongzhao@microsoft.com}

\author{Jesús Camacho-Rodríguez}
\affiliation{%
  \institution{Microsoft Fabric DW}
  \city{Mountain View, CA}
  \country{USA}
}
\email{jesusca@microsoft.com}

\author{Yuanyuan Tian}
\affiliation{%
  \institution{Microsoft Gray Systems Lab}
  \city{Mountain View, CA}
  \country{USA}
}
\email{yuanyuantian@microsoft.com}

\author{Andreas Kipf}
\affiliation{%
  \institution{University of Technology Nuremberg}
  \city{Nuremberg}
  \country{Germany}
}
\email{andreas.kipf@utn.de}

% Short author list.
\renewcommand{\shortauthors}{Stoian et al.}

% For CMT.
\begin{comment}
Despite decades of research, cardinality estimation remains the optimizer's Achilles heel, with industrial-strength systems exhibiting a systemic tendency toward underestimation. At cloud scale, this is a severe production vulnerability: in Microsoft's Fabric Data Warehouse (DW), a mere 0.05% of extreme underestimates account for 95% of all CPU under-allocation, causing preventable slowdowns for thousands of queries daily. Yet recent theoretical work on provable upper bounds only corrects overestimation, leaving the more harmful problem of underestimation unaddressed. We argue that closing this gap is an urgent priority for the database community.

As a vital step toward this goal, we introduce xBound, the first theoretical framework for computing provable join size lower bounds. By clipping the optimizer's estimates from below, xBound offers strict mathematical safety nets demanded by production systems---using only a handful of lightweight base table statistics. We demonstrate xBound's practical impact on Fabric DW: on the StackOverflow-CEB benchmark, it corrects 23.6% of Fabric DW's underestimates, yielding end-to-end query speedups of up to 20.1x, demonstrating that even a first step toward provable lower bounds can deliver meaningful production gains and motivating the community to further pursue this critical, open direction. 
\end{comment}

% Abstract.
\begin{abstract}
Despite decades of research, cardinality estimation remains the optimizer's Achilles heel, with industrial-strength systems exhibiting a systemic tendency toward underestimation. At cloud scale, this is a severe production vulnerability: in Microsoft's Fabric Data Warehouse (DW), a mere 0.05\% of extreme underestimates account for 95\% of all CPU under-allocation, causing preventable slowdowns for thousands of queries daily. Yet recent theoretical work on provable upper bounds only corrects overestimation,
leaving the more harmful problem of underestimation unaddressed. We argue that closing this gap is an urgent priority for the database community.

As a vital step toward this goal, we introduce \xbound{}, the first theoretical framework for computing provable join size lower bounds.
By clipping the optimizer's estimates from below, \xbound{} offers strict mathematical safety nets demanded by production systems---using only a handful of lightweight base table statistics.
We demonstrate \xbound{}'s practical impact on Fabric DW: on the \textsc{StackOverflow-CEB} benchmark, it corrects 23.6\% of Fabric DW's underestimates, yielding end-to-end query speedups of up to 20.1x, demonstrating that even a first step toward provable lower bounds can deliver meaningful production gains and motivating the community to further pursue this critical, open direction. 
\end{abstract}

\maketitle

%%% do not modify the following VLDB block %%
%%% VLDB block start %%%
% \pagestyle{\vldbpagestyle}
% \begingroup\small\noindent\raggedright\textbf{PVLDB Reference Format:}\\
% \vldbauthors. \vldbtitle. PVLDB, \vldbvolume(\vldbissue): \vldbpages, \vldbyear.\\
% \href{https://doi.org/\vldbdoi}{doi:\vldbdoi}
% \endgroup
% \begingroup
% \renewcommand\thefootnote{}\footnote{\noindent
% This work is licensed under the Creative Commons BY-NC-ND 4.0 International License. Visit \url{https://creativecommons.org/licenses/by-nc-nd/4.0/} to view a copy of this license. For any use beyond those covered by this license, obtain permission by emailing \href{mailto:info@vldb.org}{info@vldb.org}. Copyright is held by the owner/author(s). Publication rights licensed to the VLDB Endowment. \\
% \raggedright Proceedings of the VLDB Endowment, Vol. \vldbvolume, No. \vldbissue\ %
% ISSN 2150-8097. \\
% \href{https://doi.org/\vldbdoi}{doi:\vldbdoi} \\
% }\addtocounter{footnote}{-1}\endgroup
%%% VLDB block end %%%

%%% do not modify the following VLDB block %%
%%% VLDB block start %%%
% \ifdefempty{\vldbavailabilityurl}{}{
% \vspace{.3cm}
% \begingroup\small\noindent\raggedright\textbf{PVLDB Artifact Availability:}\\
% The source code, data, and/or other artifacts have been made available at \url{\vldbavailabilityurl}.
% \endgroup
% }
%%% VLDB block end %%%

% Activate red lines.
% \linenumbers

\section{Introduction}\label{sec:intro}

\begin{figure}[t]
    \centering
    \includegraphics[width=1.0\linewidth]{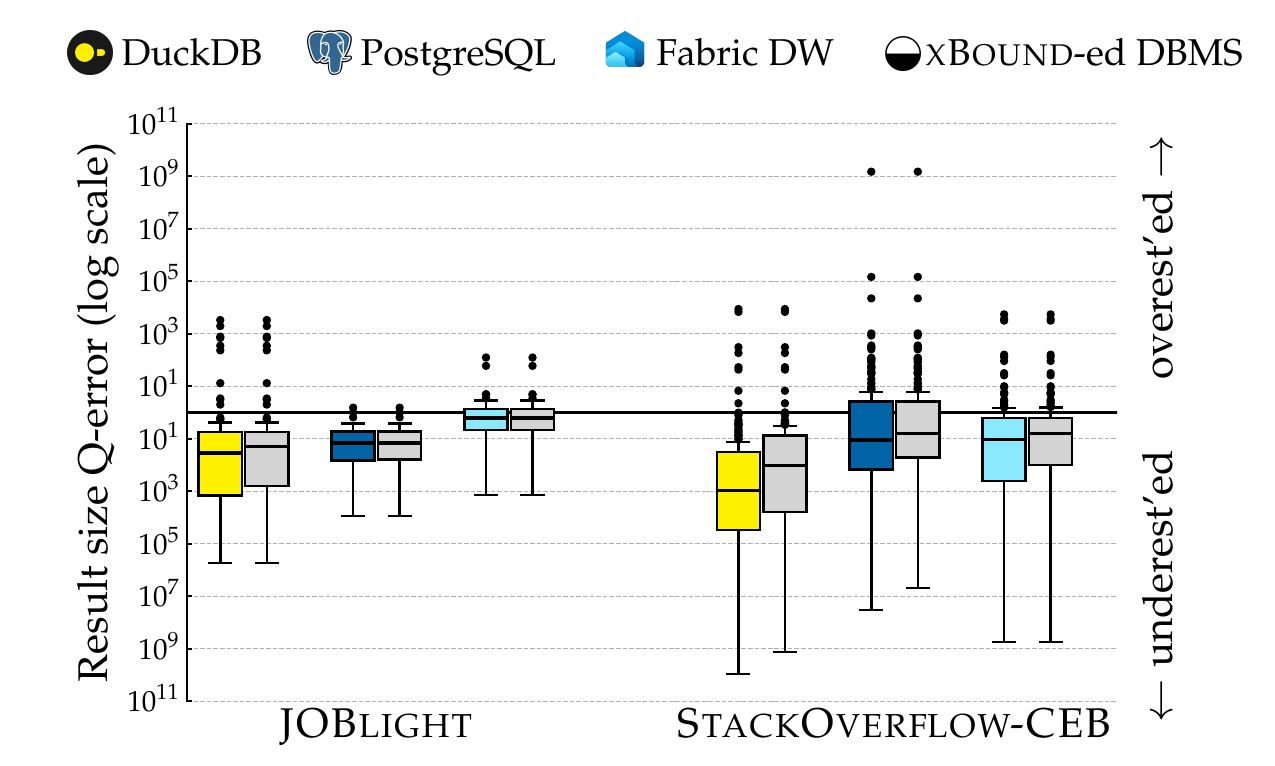}
    \caption{Cardinality underestimation on the \textsc{JOBlight} and \textsc{StackOverflow-CEB} benchmarks. We correct the underestimates by clipping the optimizer's estimates with \xbound{}'s cardinality lower bounds.}
    \label{fig:cardinality-misestimation}
\end{figure}

After five decades of research and development, query optimization remains an ``unsolved'' problem. As pointed out by both practitioners and researchers, cardinality estimation is the \emph{Achilles heel}~\cite{Lohmanblog, Lohmanbtw, job, still-asking} of query optimization. By estimating the size of intermediate results, cardinality estimation is the primary input to the cost model for plan selection. It is equally essential for resource management, dictating memory and CPU allocation. A seminal analysis by Leis et al.~\cite{job} empirically evaluated five open-source and commercial databases, revealing that all routinely produce large estimation errors, with join size estimation being particularly fragile. 

This challenge has spurred a wealth of research on cardinality estimation, spanning sketch-based, sampling-based, to ML-based methods~\cite{CeSurveyFelix, stats-ceb, heinrich-survey}. However, no matter how well a method performs empirically, there is rarely any guarantee on the correctness of the estimation. In response, the theory community has recently proposed provable bounds for cardinality estimation, most notably the recent \lpbound{}~\cite{lpbound}. Yet, all these proposals provide only \emph{upper bounds}. Rather than serving as direct replacements for the cardinality estimator, these upper bounds are best suited for correcting the original optimizer's \emph{overestimates} (by capping the estimates). Crucially, however, the complementary problem, safeguarding against underestimates, remains unaddressed.

\sparagraph{The Still Open Problem of Underestimation.}
The prior focus on upper bounds is fundamentally misaligned with production reality: cardinality underestimation is far more prevalent than overestimation in practice~\cite{job}. When testing DuckDB v1.4, PostgreSQL18, and Fabric Data Warehouse (DW) on \textsc{JOBlight}~\cite{joblight} and \textsc{StackOverflow-CEB}~\cite{stats-ceb}\footnote{That is, \textsc{STATS-CEB} run on the 220 GB StackOverflow dump from \textsc{SQLStorm}~\cite{sqlstorm}.}, all three systems overwhelmingly underestimate result sizes (see Fig.~\ref{fig:cardinality-misestimation}). Moreover, although misestimation in both directions is problematic, underestimation is generally more dangerous. It tricks the optimizer into selecting fragile plans designed for small data, favoring nested-loops over hash joins or broadcast over shuffle joins, causing catastrophic plan regressions and order-of-magnitude slowdowns. At runtime, it risks resource starvation: out-of-memory errors and excessive disk spilling. Overestimation, by contrast, leads to overly conservative plans and overallocated resources, inefficient but seldom catastrophic.

\sparagraph{The Peril of Extreme Underestimation.} In large-scale production environments, extreme underestimation is a systemic vulnerability that severely degrades query performance. This is clearly signaled by CPU resource under-allocation. Fabric DW, for instance, allocates CPU per query stage based on its operators' estimated output cardinality and row size.
Fig.~\ref{fig:dw-disk-spilling} reveals a heavy negative tail in Fabric's CPU estimation: while >99\% of stages face no underestimation (cardinality underestimation does not always lead to CPU underestimation), 1 in 10'000 stages suffers from >70x CPU under-allocation, with extreme outliers starved by over a thousandfold. At Fabric's scale, this causes thousands of queries to suffer severe, preventable slowdowns every day. The database community can no longer afford to ignore this blind spot; it is imperative that both academia and industry actively shift their focus toward mitigating extreme underestimation.

\sparagraph{The Case for Cardinality Lower Bounds.}
As a vital step towards solving this problem, we introduce \xbound, the first theoretical framework for computing provable join size lower bounds using only a handful of lightweight base table statistics. Clipping optimizer estimates with these bounds significantly improves Q-errors, as shown in Fig.~\ref{fig:cardinality-misestimation}, notably reducing Fabric DW's 90th-percentile Q-error of its underestimates on \textsc{StackOverflow-CEB} by 35.8x. Theoretical lower bounds are uniquely suited for adoption in production systems because they provide strict mathematical guarantees against catastrophic worst-case scenarios, unlike empirical methods that may perform well on average but lack necessary safety nets.

Similar to \lpbound{} and sketch-based estimators~\cite{join-sketch, nicolau-sketch}, \xbound the key fact that the size of a join is the inner product of the join-key degree vectors $\va$ and $\vb$ of two relations $A$ and $B$, respectively. Our approach relies on the following observation: Lower bounds require inequalities that bound from below this very inner product; we refer to these as \emph{reverse} inequalities. Fortunately, reverse inequalities only require a handful of statistics: $\lpinfty$---the maximum key frequency, $\ltwo$---the Euclidean norm of the degree vector, and $\lninfty$---the minimum key frequency. On the other hand, they require \emph{positive} input vectors. We will thus first lower bound the number of non-zero entries in the entry-wise product $\va \odot \vb$, enabling their application. 

Results show that \xbound corrects 23.6\% of underestimates on the \textsc{StackOverflow-CEB} benchmark, yielding end-to-end query speedups of up to 20.1x on Fabric DW.

\sparagraph{Contributions.} We summarize our contributions in the following:
\begin{itemize}
  \item[(1)] We make the case for cardinality lower bounds by pointing out severe underestimation in production, and introduce \xbound, the first framework for provable lower bounds on join sizes. In its current form, \xbound{} estimates multi-way joins over the same join key.
  \item[(2)] We extend our framework to obtain lower bounds over filtered table scans, supporting equality and range predicates as well as conjunctions and disjunctions.
  \item[(3)] We empirically evaluate the impact of correcting Fabric DW's underestimates with \xbound{}'s lower bounds on the \textsc{StackOverflow-CEB} benchmark.
\end{itemize}

\sparagraph{Structure.} The rest of the paper is structured as follows: We provide the preliminaries needed in \secref{sec:prelims}, including $\ell_{p}$-norms and inner-product inequalities. Then, in \secref{sec:xbound}, we detail \xbound{}'s components. Subsequently, we extend our framework to support lower bounds on filtered table scans in \secref{sec:base-stats}. We continue with the evaluation of \xbound{}'s lower bounds in \secref{sec:experiments}, and then with a discussion in \secref{sec:discussion}, where we motivate further research on lower bounds. Afterwards, we outline related work in \secref{sec:rel-work} and conclude in \secref{sec:conclusion}.

\begin{figure}[t]
    \centering
    \includegraphics[width=1\linewidth]{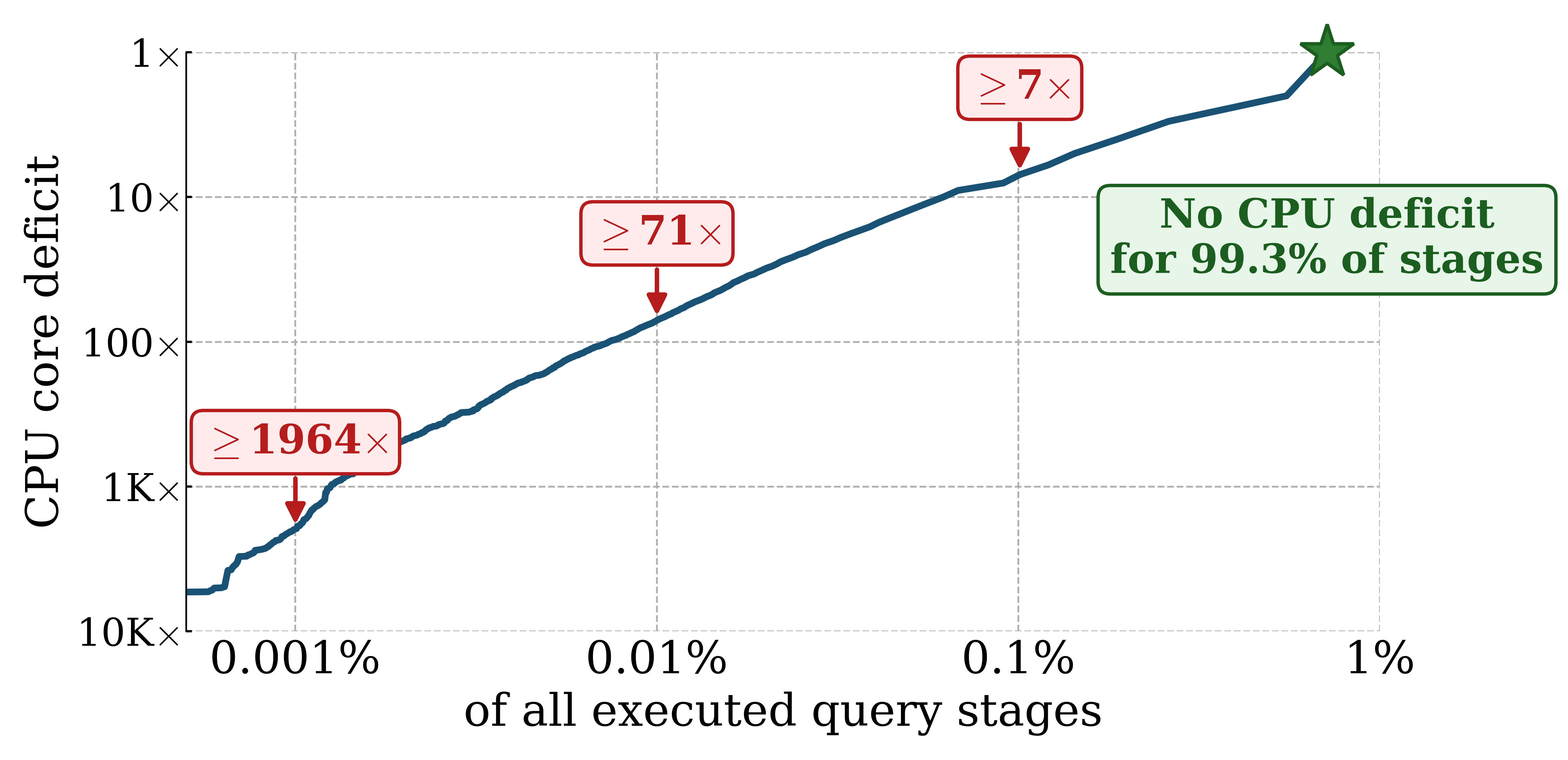}
    \caption{CPU deficit due to cardinality underestimation of query stages executed in production of Fabric Data Warehouse in Feb.~2026 across all cloud regions.}
    \label{fig:dw-disk-spilling}
\end{figure}

\section{Preliminaries}\label{sec:prelims}

\sparagraph{Notation.} For a given relation $R$, denote by $\texttt{Attrs}(R)$ its set of attributes.
Let $|\cdot|$ denote the cardinality of a given set. To denote bounds on the cardinality, we use $|\cdot|^-$ and $|\cdot|^+$ to mean a lower and upper bound, respectively.
We use bold characters for vectors, e.g., $\va$.
To access position $k$ in a vector $\mathbf{a}$, we write $\mathbf{a}[k]$.
For a set of positions $K = \{k_1, \ldots, k_m\}$, we write $\va[K]$ to denote the vector of values at the corresponding positions.
In turn, to index a sequence $d$, we use subscript notation, i.e., $d_k$ (this distinction helps with readability in the proofs).
For integers $a$ and $b$, we define $[a, b] \triangleq \{a, \ldots, b\}$, and, as shorthand, $[n] \triangleq [1, n]$.

\sparagraph{Supported Query Type.} Our framework supports select-project-join queries, as follows:
\begin{lstlisting}[label=l:query-example,language=SQL]
select [projection-list]
from R_1, ..., R_n
where [inner-join-predicates] and [table-predicates];
\end{lstlisting}
Throughout the paper, we will assume that the queries have bag semantics, as it is currently the case for leading database systems.\footnote{We leave it as future work to understand the effect of our framework for set semantics.}
To ensure notation consistency with prior work on join size upper bounds, we closely follow the notation in Ref.~\cite{lpbound}, i.e., we will use the conjunctive query notation instead of SQL: \[
  Q(V_0) = R_1(V_1) \land R_2(V_2) \land \ldots \land R_n(V_n) \land [\text{table-predicates}],
\]
where each $V_i \triangleq \texttt{Attrs}(R_i)$ is the corresponding set of variables of relation $R_i$, and $V_0 \subseteq V_1 \cup \ldots \cup V_n$ are the \texttt{group-by} attributes.
Note that, since we will be working with \emph{full} conjunctive queries only, it holds that $V_0 = V_1\:\cup \ldots \cup\:V_n$, i.e., the query does not have a \texttt{group-by} clause. 
To this end, let $\Vars(Q) \triangleq V_1\:\cup \ldots \cup\:V_n$ denote the set of query variables, and we will write $Q(\ast)$ as a syntactic sugar for a full conjunctive query, i.e., $V_0 = \Vars(Q)$.

Query $Q$ is acyclic if its relations $R_1, \ldots, R_n$ can be placed on the nodes of a tree such that, for each variable $X_i \in \Vars(Q)$,
the set of tree nodes containing $X_i$ forms a connected component. For instance, the three-way joins $J_1$ and $J_2$ are acyclic,\footnote{Even Berge-acyclic, since every two relations share at most one variable.} yet $J_3$ is cyclic:
\begin{flalign}
  J_1(\ast) &= R_1(X, Y) \land R_2(Y, Z) \land R_3(Z, U),\\
  J_2(\ast) &= R_1(X, Y) \land R_2(X, Z) \land R_3(X, U),\\
  J_3(\ast) &= R_1(X, Y) \land R_2(Y, Z) \land R_3(Z, X).
\end{flalign}
In this work, we focus solely on inner and outer joins on single keys. Notably, this is the case for many subexpressions of the well-studied benchmarks, such as JOB~\cite{job, joblight} and \stats{}~\cite{stats-ceb}.

\sparagraph{Join Pool.} We refer to a join pool as a set of attributes that are joined over in the same query, e.g., the \texttt{userid} across all tables of a social network schema. Join pools can be discovered either directly from the schema or, in its absence, dynamically as the query workload is running. This definition will prove helpful when introducing the heavy partition in \secref{subsec:heavy-partition}.

\sparagraph{Degree Sequences \& Vectors.} The degree sequence from $X$ to $Y$ in relation $R$ is the sequence $\deg(Y | X) \triangleq (d_1, \ldots, d_N)$, constructed as follows:
Compute the domain of $X$, $\Dom(R.X) = \{x_1, \ldots, x_N\}$, and denote by $d_i = |\sigma_{X = x_i}(\Pi_{XY}(R))|$ the degree, i.e., the frequency, of $x_i$, and sort the values in \emph{ascending} order $d_1 \leq \ldots \leq d_N$; note that, in \lpbound{}, the degrees are in descending order.
When $|X| \leq 1$, we call the degree sequence \emph{simple}, and when $XY = \Attrs(R)$, we say the degree sequence is \emph{full}, and denote it by $\deg_R(\ast | X)$; we will skip the subscript when it is clear from the context.
As constrained by our supported query type (see the above paragraph), we will be using only simple full degree sequences.

\sparagraph{$\ell_p$-Norms.} Instead of storing the degree sequences themselves, we will be storing simple statistics related to them, namely $\ell_p$-norms.
The advantage of operating on norms is that we have to store only a few bytes per relation; in addition, $\ell_p$-norms capture surprisingly well the skew of the degree sequence.
Formally, the $\ell_p$-norm of a sequence $d = (d_1, d_2, \ldots)$ is $\|d\|_p \triangleq (\sum_{i} d_i^p)^{1/p}$, where $p \in (0, \infty]$.
Note that as $p$ goes to $\infty$, the value of the norms decreases and converges to $\max_{i} d_i$.
Finally, notice that the $\ell_p$-norms are invariant to the order of the elements in the underlying sequence.
In addition, as a syntactic sugar, we introduce $\ell_{-\infty} \triangleq \min_{i} d_i$. Usually, this is 1, yet there are relations in the benchmarks where this is larger, e.g., \texttt{movie\_info\_idx}'s \texttt{movie\_id} key-column in the IMDb dataset~\cite{job}.

To denote the $\ell_p$-norm of a given column $X$ in relation $R$, we write $\ell_{R, X, p}$.
As we cannot always compute the norm values exactly, we will work with lower and upper bounds thereof.
To this end, we write $\ell_p^{-}$ and $\ell_{p}^{+}$ to denote its lower and upper bound, respectively.

To understand how prior work on join size upper bounds utilizes $\ell_p$-norms, we refer the reader to related work (see ~\secref{sec:rel-work}).

\subsection{Rearrangement Inequality}\label{sec:re-ineq}

Before we move on to the inner-product reverse inequalities that we will be using,
let us consider an inequality that connects inner products of arbitrarily ordered vectors with
inner products of ordered sequences. This will prove useful for the overall understanding of \xbound{}, and its difference from \lpbound{}~\cite{lpbound}, a state-of-the-art framework for cardinality upper bounds.

\begin{inequality}[\textsc{Rearrangement Inequality}]
  \label{ineq:rearr-ineq}
  Let $a_1 \le \ldots \le a_d$ and $b_1 \le \ldots \le b_d$
  be two non-decreasing sequences of real numbers.
  Then, for every permutation $\sigma$ of $[d]$, it holds that \[
    \sum_{i=1}^d a_ib_{d+1-i}
    \;\le\;
    \sum_{i=1}^d a_ib_{\sigma_i}
    \;\le\;
    \sum_{i=1}^d a_ib_i.
\]
\end{inequality}

In a nutshell, the RHS of the inequality is an \emph{upper bound} on the inner product, while the LHS is a lower bound.
In fact, the RHS is used in \safebound{}~\cite{safebound-1, safebound-2}, a pre-runner of \lpbound{}~\cite{lpbound} (see ~\secref{sec:rel-work} for an exposition).
What we (implicitly) leverage in \xbound{} is the LHS, i.e., taking the inner product between reversely ordered sequences.
Yet, similar to \lpbound{}, we will be directly bounding the inner product by $\ell_p$-norms only.

\subsection{Inner-Product Reverse Inequalities}\label{sec:rev-ineqs}

As motivated in the introduction, we need the so-called reverse inequalities on the inner product of two degree vectors $\va$ and $\vb$.
The caveat is that this class of inequalities requires the input vectors to be strictly positive, i.e., $a_i, b_i > 0, \forall i \in [d]$.
Recall that degree vectors are generally only non-negative, i.e., they contain zeros for non-existing keys in a relation.
In ~\secref{subsec:exact-l0}, we show how we can still apply these inequalities by lower-bounding the number of non-zero entries in both vectors. We start with an intuitive reverse inequality:

\begin{inequality}[\textsc{Min-Degree Inequality}]
  \label{ineq:min-degree}
  Let $\mathbf{a} = (a_1, \ldots, a_d)$ and $\mathbf{b} = (b_1, \ldots, b_d)$ 
  be two positive vectors of real numbers such that
\begin{equation*}
  0 < m_a \le a_i < \infty,\;
  0 < m_b \le b_i < \infty,\;
  \forall i \in [d].
\end{equation*}
  Then
  \[
    \sum_{i=1}^d a_i b_i \geq \max(m_a \cdot \displaystyle\sum_{i=1}^d b_i, m_b \cdot \displaystyle\sum_{i=1}^d a_i).
  \]
\end{inequality}
This inequality tells us that, to lower-bound the inner product, it suffices to assume that one vector is filled with its smallest element. Notably, this requires only the vector's element sum and its smallest element. Next, we continue with a more involved reverse inequality:

\begin{inequality}[\textsc{P\'olya--Szeg\H{o} Inequality}~\cite{ps-ineq}]
  \label{ineq:ps}
  Let $\mathbf{a} = (a_1, \ldots, a_d)$ and $\mathbf{b} = (b_1, \ldots, b_d)$ 
  be two positive vectors of real numbers such that
\begin{equation*}
  0 < m_a \le a_i \le M_a < \infty,\;
  0 < m_b \le b_i \le M_b < \infty,\;
  \forall i \in [d].
\end{equation*}
  Then
  \[
    \sum_{i=1}^d a_i b_i
    \;\ge\;
    2\,
    \frac{
      \sqrt{\displaystyle\sum_{i=1}^d a_i^2}
      \;\sqrt{\displaystyle\sum_{i=1}^d b_i^2}
    }{
      \sqrt{\dfrac{M_a M_b}{m_a m_b}}
      +
      \sqrt{\dfrac{m_a m_b}{M_a M_b}}
    }.
  \]
\end{inequality}
Let us inspect this inequality.
To lower-bound the inner product $\va \cdot \vb$, it requires the following vector norm statistics:
\begin{itemize}
  \item $\ltwo$: $\|\mathbf{a}\|_2$ and $\|\mathbf{b}\|_2$,
  \item $\linfty$: $M_a \triangleq  \max_i a_i$ and $M_b \triangleq  \max_i b_i$,
  \item $\lninfty$: $m_a \triangleq  \min_i a_i$ and $m_b \triangleq \min_i b_i$.
\end{itemize}
As we will estimate the cardinalities of $k$-way joins, we also need the generalization of the above inequality for $k$ vectors.
Fortunately, there is one. Such inequalities form the interesting class of reverse H\"older's inequalities.
As a backbone, we will be using the (currently) tightest inequality in the literature, which indeed generalizes Ineq.~\eqref{ineq:ps}.
Hence, it infers the limitation of relying on $\ltwo$ only. We discuss possible extensions in ~\secref{sec:discussion}.

% https://docs.google.com/presentation/d/1hnXFZnQo7qOoXEE9IiisVJ7pOcgL5uge7QMafF2nct0/edit?usp=sharing
\begin{figure}
    \centering
    \includegraphics[width=0.75\linewidth]{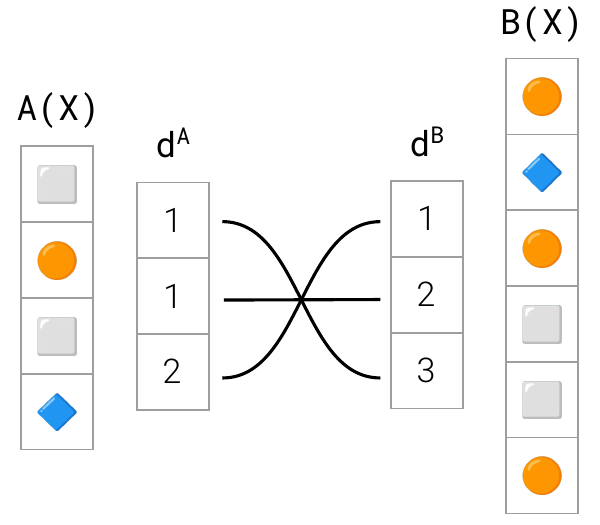}
    \caption{Obtaining a lower bound on the join size, assuming all keys join, by multiplying \emph{cross-wise} the values of the (non-decreasing) degree sequences. This follows from the rearrangement inequality, Ineq.~\eqref{ineq:rearr-ineq}.}
    \label{fig:toy-example}
\end{figure}

\begin{inequality}[\textsc{Generalized Reverse H\"older's Inequality}~\cite{gen-rev-hoelder}]
  Let $\mathbf{v}^{(1)}, \ldots, \mathbf{v}^{(n)}$ be $n \geq 2$ positive vectors of $d$ real numbers each, such that $0 < m_i \leq v^{(i)}_j \leq M_i, i \in [n], j \in [d]$. Then \[
    \prod_{i=1}^{n} \|\mathbf{v}^{(i)}\|_2 \leq \frac{{\sqrt{d}}^{n - 2}}{2^{n - 1}} \left( \prod_{k=2}^{n} B_k\right) \|\mathbf{v}^{(1)} \ldots \mathbf{v}^{(n)}\|_1,
  \]
  where $B_k = \sqrt{\prod_{i=1}^{k} \frac{M_i}{m_i}} + \sqrt{\prod_{i=1}^{k} \frac{m_i}{M_i}}$, for $k \geq 2$.
  \label{ineq:gen-rev-hoelder}
\end{inequality}

First, see that for $n = 2$, the above reduces to the P\'olya--Szeg\H{o} inequality, Ineq.~\eqref{ineq:ps}. Second, note that Ineq.~\eqref{ineq:gen-rev-hoelder} is \emph{permutation-sensitive}, i.e., it does matter in which order the vectors are input.
To see why this is the case, consider the $B_k$ constants: they consider only the statistics of the first $k$ vectors.

\subsubsection{Improved Version.}\label{sec:improved-gen-rev-hoelder}

To obtain a better lower bound in Ineq.~\eqref{ineq:gen-rev-hoelder}, we will consider all (non-redundant) permutations of the input vectors, namely the $(n - 1)!$ permutations obtained after excluding those that differ only by swapping the first two elements, and take the maximum of the resulting lower bounds. While this increases the optimization time, this leads to better bounds.
We briefly mention that it is even possible to have a convolution-like bound, by resorting to a subset dynamic program in time $O(3^n)$; however, we have not observed any improvement in the bounds.
% https://docs.google.com/presentation/d/1bV1XwKSlZ6LiYgcLnUBjQ663iBiOAi0IQ8dctt7miRQ/edit?usp=sharing
\begin{figure*}
    \centering
    \includegraphics[width=1.0\linewidth]{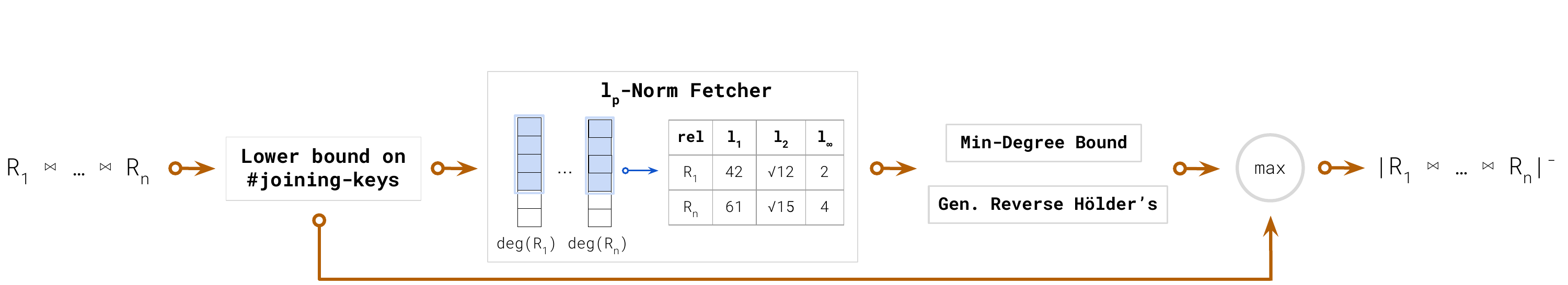}
    \caption{\xbound{}'s estimation method: Given a multi-way join on the same key, it (a) lower-bounds the number of joining keys (= $m$), (b) uses $m$ as the prefix length for evaluating degree-sequence norms on the base tables, and (c) takes the best bound from the generalized reverse H\"older's inequality, \texttt{min-degree bound}, and $m$. The output represents a lower bound on the query size. Note that this visualization does not cover the heavy partition (\secref{subsec:heavy-partition}).}
    \label{fig:overview}
\end{figure*}

\section{Join Size Lower Bounds}\label{sec:xbound}

\sparagraph{Intuition.} We provide the key idea behind join size lower bounds. Consider the toy example in Fig.~\ref{fig:toy-example}, where we show two join columns, $A(X)$ and $B(X)$, with their corresponding degree sequences on the side. For the sake of presentation, we assume that each key has a join partner in the other table. The join cardinality can be computed as: 2 $\times$ 2 ($\Box$) + 1 $\times$ 3 ($\Circle$) + 1 $\times$ 1 (\rotatebox[origin=c]{45}{$\Box$}) = 8.

The degree sequences depicted on the side do not capture the keys for which the frequencies are attained.
The question is thus: What is a valid (non-zero) lower bound in this case?
A moment of thought reveals that multiplying \emph{cross-wise} the values in the degree sequences yields the best we can hope for.\footnote{See the visualization; this also inspired our estimator's acronym.} In this case, the lower bound is 1 $\times$ 3 + 1 $\times$ 2 + 2 $\times$ 1 = 7. This does have a mathematical argument behind it, namely the \emph{rearrangement inequality}; we already discussed it in ~\secref{sec:re-ineq}.

On the other side, \safebound{} (and, extrapolated, \lpbound{}) uses element-wise multiplication to upper bound the join size. In this example, the upper bound reads: 1 $\times$ 1 + 1 $\times$ 2 + 2 $\times$ 3 = 9.

\sparagraph{From Degree Sequences to Norms.} Surely, we could operate on degree sequences, similar to what \safebound~\cite{safebound-1, safebound-2} did for upper bounds in the context of Berge-acyclic queries.
However, simply storing them might yield increased storage costs.
Instead, we will follow the approach of \lpbound{}, namely operating with the associated norms of the degree sequences.
The main disadvantage is that $\ell_p$-norms are agnostic to the order of the elements in the vector they are applied to.

\sparagraph{Overview.} We visualize \xbound{}'s estimation strategy in Fig.~\ref{fig:overview}: First, (a) we infer a lower bound on the number of \emph{distinct} joining keys. This is essentially the length of the degree sequences which we should take the norms from. Then, (b) we estimate the norms needed for the reverse inequalities in ~\secref{sec:rev-ineqs}, such as $\ell_1$, $\ell_{2}$, and $\ell_{\infty}$.
By estimating, we mean computing a strict lower and upper bound of them. Finally, (c) we lower-bound the inner product by taking the maximum of all bounds output by the set of inequalities. Note that the value from (a) is also a valid lower bound.

Next, we start with the very first step: obtaining hard lower bounds the number of distinct joining keys.

\subsection{Hard $\ell_0$ Lower Bounds}\label{subsec:exact-l0}

\sparagraph{Why This is Needed.} The attentive reader might have remarked that our toy example in Fig.~\ref{fig:toy-example} does not entirely reflect the reality.
Indeed, it is not always the case that the two relations will have the same set of keys.
As a consequence, if we considered the full key domain, our degree sequences would have been padded with zeros.
This is detrimental to the application of the inequalities we presented in ~\secref{sec:rev-ineqs}, as they all require \emph{positive} vector entries.

Our key insight is that we can first infer a lower bound on the number of joining keys, essentially $\lb{\ell_{A\,\bowtie\,B,\,X,\,0}}$, and then apply the inequalities with norms confined to the first $\lb{\ell_{A\,\bowtie\,B,\,X,\,0}}$ entries in the degree sequences of both relations. Note that this alone is a valid lower bound on the join size. As we will later show, inner-product inequalities can boost it even further.

\sparagraph{Two-Way Joins.} Let $\mathcal{K}_A$ and $\mathcal{K}_B$ be the set of keys in $A(X)$ and $B(X)$, respectively.
Bounding the number of joining keys is equivalent to bounding $|\K_{A} \cap \K_{B}|$.
Surely, having at our disposal only the cardinalities of the two sets, $\K_A$ and $\K_A$, we \emph{cannot} have a better lower bound than 0.
The key realization is that we can use the min/max-values of the columns, commonly referred to as zonemaps~\cite{zonemaps}, which are a standard feature in modern file formats~\cite{parquet, btr-blocks, vortex, fast-lanes}.
Let us understand why this changes the whole picture.
Note that we can re-write $|\K_A \cap \K_B|$ as follows:
\begin{equation}
  |\K_A \cap \K_B| = |\K_A| + |\K_B| - |\K_A \cup \K_B|.
  \label{eq:eq-2-rel-base}
\end{equation}

Thus, to lower-bound $|\K_A \cap \K_B|$ we are to \emph{upper-bound} $|\K_A \cup \K_B|$ (due to the negative sign).
In turn, the latter can be bounded from above by the span size, i.e., $\max\{ \K_A, \K_B \} - \min \{ \K_A, \K_B \}$.
With this we have:
\begin{equation}
  |\K_A \cap \K_B| \geq \max \{ 0, |\K_A| + |\K_B| - (\max\{ \K_A, \K_B \} - \min \{ \K_A, \K_B \})\}.
  \label{eq:eq-2-rel}
\end{equation}

\sparagraph{Example.} Let us consider an example. Take as domain the set $\{1, \ldots, 100\}$, and let the two sets have cardinalities $|\K_A|$ = 75 and $|\K_B|$ = 50.
What is, in this case, a valid lower bound on the intersection size? At a closer inspection, the two sets can be at the two ends of the range, and thus they intersect only on 25 elements.
This exactly matches Eq.~\eqref{eq:eq-2-rel}: $|\K_A \cap \K_B| \geq 75 + 50 - 100 = 25$.

We now outline a simple generalization to multi-way joins.

\sparagraph{Multi-Way Joins.} Note that we can generalize Eq.~\eqref{eq:eq-2-rel-base} as follows. Given $n$ sets $\{\K_i\}_{i=1}^{n}$, we are to lower-bound their intersection size. Note again that we do not dispose of the sets themselves,
but of limited statistics related to them, such as (bounds on) their cardinalities and their domain size.

To this end, we will be employing a lightweight generalization of the two-way join case. Denote the universe as $U \vcentcolon= \bigcup_{i}^{n} \K_i$.
Then:
\begin{align}
  \bigl|\;\bigcap_{i=1}^{n} \K_i\;\bigr|
    &= |U| - \Bigl|\;\bigcup_{i=1}^{n}\;(U \setminus \K_i)\;\Bigr| \notag\\
    &\ge |U| - \sum_{i=1}^{n} |U \setminus \K_i| \notag\\
    &= |U| - \sum_{i=1}^{n} \bigl(|U| - |\K_i|\bigr) \notag\\
    &= |U| - n|U| + \sum_{i=1}^{n} |\K_i| \notag\\
    &= \sum_{i=1}^{n} |\K_i| - (n - 1)|U|,
    \label{eq:gen-ndv-lb}
\end{align}
where we used the union bound in the second step. Note that, for the case $n = 2$, we obtain exactly the two-way join case, Eq.~\eqref{eq:eq-2-rel-base}.

\subsection{Probabilistic $\ell_0$ Lower Bounds}\label{subsec:prob-l0}

The above section discussed how to get \emph{hard} lower bounds on the number of distinct joining keys. However, we noticed that collecting the \emph{exact} $\ell_0$ statistics can be slow, especially for larger tables. Several database systems, including Fabric DW, already store on their catalogs HyperLogLog~\cite{hll} sketches of their columns. Indeed, these can be used to get a \emph{probabilistic} lower bound on $\ell_0$, by plugging them in Eq.~\eqref{eq:gen-ndv-lb} for the $|\K_i|$ values. Alternatively, one can use the inclusion-exclusion principle, yet this leads to high-variant estimates~\cite[Sec.~2.2]{theta-sketch}

In fact, there are specialized sketches that can estimate the intersection size of $n$ sets directly~\cite{theta-sketch, set-sketch}. To this end, we used the \texttt{ThetaSketch}~\cite{theta-sketch}, as it is readily provided in Apache's DataSketches library. Hence, instead of lower bounding $|\cap_{i=1}^{n} \K_i|$ via the union bound---recall Eq.~\eqref{eq:gen-ndv-lb}, we can get a better lower bound, with a 99\% confidence, by intersecting the Theta sketches corresponding to the $n$ sets. These sketches are stored as pre-computed base table statistics; we come later in \secref{sec:base-stats} onto how we manage them.

Notably, since the inner-product inequalities are deterministic functions of the estimated intersection lower bound, the 99\% confidence guarantee provided by the Theta sketch propagates unchanged to the derived query size lower bounds.

\sparagraph{Outlook.} This concludes our approach to lower-bounding the number of distinct joining keys for an arbitrary number of joins on the same join key.
In a subsequent section (\secref{subsec:light-partitions}), we will further develop it, by partitioning the value range into equi-width bins, and applying the above strategy within each bin.
Until then, we address the aspect of how to leverage this very lower bound to enable reverse inner-product inequalities, which will further boost it.

\subsection{Enabling Reverse Inequalities}

We show how the last section leads to a valid application of the reverse inequalities in ~\secref{sec:rev-ineqs}, which require that the vectors have positive entries only.

\sparagraph{Intuition.} Let us start with the simplest case, the two-way join.
Let $A(X, \ldots)$ and $B(X, \ldots)$ be two relations which join on $X$, and let $D$ be the support of both $A(X)$ and $B(X)$. 
By the above considerations (\secref{subsec:exact-l0}), we know the lower bound $\ell_{A\,\bowtie\,B,\,X,\,0}^- \vcentcolon= m$ on the number of distinct $X$ keys of the join $A \bowtie B$.
Note that this tells us a little bit more. Indeed, let $\va$ and $\vb$ be the degree vectors of column $X$ in $A$ and $B$, respectively.
Recall that the lower bound $m$ simply tells us that we can find a set $\K \subseteq A(X) \cap B(X)$ of size $|K| = m$, such that $\va[k] \cdot \vb[k] \neq 0, \forall k \in \K$.
This, in turn, implies that both $\va$ and $\vb$ take \emph{positive} values on $\K$, which makes them amenable to inner-product reverse inequalities. 

Moreover, the set $\K$ is only a subset of the \emph{actual} set of joining keys.
In other words, the sum $\sum_{k \in K} \va[k] \cdot \vb[k]$ is a lower bound on the actual join size, $\sum_{k \in D} \va[k] \cdot \vb[k]$.

\sparagraph{Two-Way Joins.} The above explanation has a caveat: we actually do not have access to the degree vectors themselves. Surely, we could store them, as argued in the beginning of ~\secref{sec:xbound}, but this is space-inefficient.
What we do have access to are the norms of the (corresponding) degree \emph{sequences}. With these, we can lower-bound the aforementioned sum $\sum_{k \in K} \va[k] \cdot \vb[k]$ by exploiting the \emph{prefix} of length $m$ in both degree sequences, as follows:

\begin{lemma}[Prefixes of Degree Sequences; two-way join]
  \label{lemma:cw-ds}
  Let $\va$ and $\vb$ be the degree vectors of column $X$ in relations $A$ and $B$, respectively, and let $\K$ denote a subset of the joining keys, i.e., $\va[k] \cdot \vb[k] \neq 0, \forall k \in \K$.
  Moreover, let $d^A$ and $d^B$ be the corresponding degree sequences.
  Then, there exists a permutation $\sigma : [|\K|] \to [|\K|]$, such that \[
    \sum_{k \in \K} \va[k] \vb[k] \geq \sum_{i=1}^{|\K|} d^A_i d^B_{\sigma_i}.  
  \]
\end{lemma}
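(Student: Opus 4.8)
The plan is to reduce the statement to the lower-bound (left-hand) side of the rearrangement inequality, Ineq.~\eqref{ineq:rearr-ineq}, applied to the degrees that actually sit on the joining keys, and then to \emph{lift} that bound from the sub-multiset of joining-key degrees to the global length-$|\K|$ prefixes $d^A_1, \ldots, d^A_{|\K|}$ and $d^B_1, \ldots, d^B_{|\K|}$. Write $m = |\K|$. Since $\K \subseteq \K_A \cap \K_B$, we have $m \le N_A$ and $m \le N_B$ (where $N_A, N_B$ are the numbers of distinct keys in $A, B$), so the ascending degree sequences $d^A, d^B$ do have prefixes of length $m$, and those prefixes are exactly the $m$ smallest degrees of each relation.

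First I would restrict attention to $\K$. For every $k \in \K$ both $\va[k]$ and $\vb[k]$ are strictly positive, and the multiset $\{\va[k] : k \in \K\}$ is a sub-multiset of the full degree multiset of $A$ (likewise for $B$). I would sort these restricted degrees ascending as $a_{(1)} \le \ldots \le a_{(m)}$ and $b_{(1)} \le \ldots \le b_{(m)}$. The actual sum $\sum_{k \in \K} \va[k]\vb[k]$ pairs $\va[k]$ with $\vb[k]$ for the \emph{same} key $k$, so it equals $\sum_{i=1}^m a_{(i)}\, b_{(\tau_i)}$ for the permutation $\tau$ induced by this key matching. By the left inequality of the rearrangement inequality, any permutation sum dominates the reverse-order pairing, giving
\[
  \sum_{k \in \K} \va[k]\vb[k] \;\ge\; \sum_{i=1}^m a_{(i)}\, b_{(m+1-i)}.
\]

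The second step lifts from these restricted order statistics to the global prefixes, via an order-statistic domination: if $T$ is a sub-multiset of size $m$ of a multiset $S$, then the $i$-th smallest element of $T$ is at least the $i$-th smallest element of $S$ — because the $i$ smallest elements of $T$ are themselves $i$ elements of $S$ that are $\le T_{(i)}$, forcing $S_{(i)} \le T_{(i)}$. Applying this with $S$ the full degree multiset of $A$ (resp.\ $B$) and $T$ the restricted degrees on $\K$ yields $a_{(i)} \ge d^A_i$ and $b_{(i)} \ge d^B_i$ for every $i \in [m]$, since $d^A_i$ (resp.\ $d^B_i$) is precisely the $i$-th smallest global degree.

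To conclude, all quantities are non-negative, so $a_{(i)}\, b_{(m+1-i)} \ge d^A_i\, d^B_{m+1-i}$ holds term by term, and summing chains the two steps together to establish the claim with the explicit \emph{reversal} permutation $\sigma_i = m+1-i$ — exactly the cross-wise pairing of Fig.~\ref{fig:toy-example}. I expect the main obstacle to be stating and justifying the order-statistic domination cleanly in the second step, and in particular being careful that the joining-key degrees genuinely form a sub-multiset of the full (sorted) degree sequence; once that is in place, the rearrangement step is routine and the final termwise comparison follows from positivity of the degrees on $\K$ together with non-negativity of all degrees.
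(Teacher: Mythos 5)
Your proof is correct and follows the same core route as the paper's: restrict the degrees to $\K$, sort them ascending, observe the sum is some permutation pairing of those restricted order statistics, and then invoke the order-statistic domination $\hat{a}_i \ge d^A_i$, $\hat{b}_i \ge d^B_i$ to pass to the global length-$|\K|$ prefixes. The one place you diverge is that you additionally apply the left-hand side of the rearrangement inequality (Ineq.~\eqref{ineq:rearr-ineq}) to replace the key-induced permutation by the explicit reversal $\sigma_i = |\K|+1-i$; the paper skips this step and simply exhibits the key-induced permutation itself as the $\sigma$ whose existence the lemma asserts. Your version therefore proves a slightly stronger, more concrete statement (matching the cross-wise pairing of Fig.~\ref{fig:toy-example}), at the cost of one extra inequality, and you also supply the justification of the order-statistic domination that the paper merely asserts.
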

\begin{proof}
  Let $(\hat{a}_i)_{i=1}^{|\K|}$ and $(\hat{b}_i)_{i=1}^{|\K|}$ denote the (non-decreasing) degree sequences of $X$ in $A$ and $B$, respectively,
  yet with values from $\va[\K]$ and $\vb[\K]$ only.
  Intuitively, these are \emph{subsequences} of the actual degree sequences $d^A$ and $d^B$.
  
  Observe now that there is a permutation $\sigma : [|\K|] \to [|\K|]$, such that we can re-write the initial inner product as \[
    \sum_{k \in \K} \va[k] \vb[k] \triangleq \sum_{i=1}^{|K|} \hat{a}_i \hat{b}_{\sigma_i}.
  \]
  In turn, it also holds that $\hat{a}_i \geq d^A_i$, $\forall i \in [|\K|]$; analogously for $\hat{b}$. With this, we obtain \[
    \sum_{i=1}^{|K|} \hat{a}_i \hat{b}_{\sigma_i} \geq \sum_{i=1}^{|K|} d^A_i d^B_{\sigma_i},
  \]
  which concludes the proof.
\end{proof}

This lemma allows us to connect a lower bound on the number of distinct joining keys with a inner product on the degree sequences.
And this is a key observation: We can now apply \emph{any} inner-product reverse inequality, as degree sequences have positive entries only.

\sparagraph{Example 1.} Let us consider a concrete example. We take the simplest reverse inequality, Ineq.~\eqref{ineq:min-degree}, which requires only the $\ell_1$ and the $\ell_{-\infty}$ statistics. Assume we already computed a lower bound $m = 4$ on the number of distinct joining keys. Consider the two degree sequences: \[
  d^A = (\colA{1}, \colA{1}, \colA{1}, \colA{2}, \colA{2}, \colA{3}),\qquad
  d^B = (\colB{1}, \colB{1}, \colB{2}, \colB{3}, \colB{3}, \colB{4}).
\]
The lower bound using the degree sequences reads: \colA{1} $\times$ \colB{3} + \colA{1} $\times$ \colB{2} + \colA{1} $\times$ \colB{1} + \colA{2} $\times$ \colB{1} = 8, as we only consider the prefix of length $m = 4$. Recall, however, that we operate only with the statistics of these degree sequences. Calculating them from the (prefixed) degree sequences, we obtain \[
  \begin{array}{c|c|c}
    \text{degree sequence} & \ell_1 & \ell_{-\infty} \\
    \hline
    d^A[:4] & \colA{1} + \colA{1} + \colA{1} + \colA{2} = \colA{5} & \colA{1}\\
    d^B[:4] & \colB{1} + \colB{1} + \colB{2} + \colB{3} = \colA{7} & \colB{1}
  \end{array}
\]
Therefore, this tells us that (i) \colA{5} keys from $A$ can join with \colB{1} key from $B$ and (ii) \colB{7} keys from $B$ can join with \colA{1} key from $A$. Taking the maximum of these two options yields a lower bound 7. Note that this is slightly worse than the lower bound via the degree sequences directly.

This is what we refer to as the \texttt{min-degree} bound, and is formalized in the upcoming \secref{subsec:min-degree}. Next, we consider another example, where we consider an $\ell_2$-based reverse inequality. 

\sparagraph{Example 2.} We would like to apply the P\'olya--Szeg\H{o} inequality, Ineq.~\eqref{ineq:ps}, and we know a lower bound $m = 4$ on the number of distinct joining keys, as in the previous example. Moreover, the two degree sequences stay the same.

To apply the P\'olya--Szeg\H{o} inequality, Ineq.~\eqref{ineq:ps}, we need the values of $\ell_2, \ell_{\infty}$, and $\ell_{-\infty}$. Calculating them from the (prefixed) degree sequences, we obtain \[
  \begin{array}{c|c|c|c}
    \text{degree sequence} & \ell_2 & \ell_\infty & \ell_{-\infty} \\
    \hline
    d^A[:4] & \sqrt{\colA{1}^2 + \colA{1}^2 + \colA{1}^2 + \colA{2}^2} = \colA{\sqrt{7}} & \colA{2} & \colA{1}\\
    d^B[:4] & \sqrt{\colB{1}^2 + \colB{1}^2 + \colB{2}^2 + \colB{3}^2} = \colB{\sqrt{15}} & \colB{3} & \colB{1}
  \end{array}
\]
Therefore, we can lower-bound the join cardinality as \[
  |A \bowtie B| \geq 2 \frac{\colA{\sqrt{7}} \cdot \colB{\sqrt{15}}}{
     \sqrt{\frac{\colA{2} \cdot \colB{3}}{\colA{1} \cdot \colB{1}}}
        +
    \sqrt{\frac{\colA{1} \cdot \colB{1}}{\colA{2} \cdot \colB{3}}}
  } \geq 7.17.
\]
This is (slightly) better than the lower bound achieved by the \texttt{min-degree} bound. This concludes the examples for two-way joins.

The topic of actually providing the base table norms will be the topic of ~\secref{sec:base-stats}.
So far, we simply described how to enable inner-product reverse inequalities, by exploiting the lower bound on the number of distinct joining keys calculated in ~\secref{subsec:exact-l0}.
Next, we will generalize the above mechanism to multi-way joins.

\sparagraph{Multi-Way Joins.} In the sequel, we will be considering a join over the same attribute $X$, i.e., we have $n$ relations $A^{(i)}(X, \ldots)$ and $\va^{(i)}$ the corresponding degree vectors, $i \in [n]$.
In addition, denote the domain $D$ as the support of all $A^{(i)}(X)$.
Similar to the two-way setting, we will have to first lower bound the number of distinct joining keys that the join produces. This has already been treated in ~\secref{subsec:exact-l0}.
We next show how to improve on this value by using reverse inner-product inequalities.

We will just need a generalization of the two-way case.
Let $m$ be the lower bound on its cardinality as in ~\secref{subsec:exact-l0}, i.e., \[
  m \vcentcolon= \ell_{\bowtie_{i=1}^{n} A^{(i)},\:X,\:0}^-.
\]
Recall that this means that we can find a subset $\K$ of the joining keys set, i.e., $\K \subseteq \cap_{i=1}^{n} A^{(i)}$.
In turn, the sum $\sum_{k \in \K} \prod_{i=1}^{n} \va^{(i)}[k]$ is indeed a lower bound on the actual join size, $\sum_{k \in D} \prod_{i=1}^{n} \va^{(i)}[k]$, since $\K \subseteq D$.
To come again to the realm of degree sequence norms, we need a similar result to Lemma~\ref{lemma:cw-ds}:

\begin{lemma}[Prefixes of Degree Sequences; multi-way join]
  Let $\va^{(i)}$ and $d^{(i)}$ be the degree vectors and sequences, respectively, of column $X$ in relations $A^{(i)}$, $\forall i \in [n]$, and $\K$ a subset of the joining keys, i.e., $\prod_{i=1}^{n} \va^{(i)}[k] \neq 0, \forall k \in \K$.
  Then, there exist permutations $\sigma^{(i)} : [|\K|] \to [|\K|]$, $\forall i \in [2, n]$, such that \[
    \sum_{k \in \K} \prod_{i=1}^{n} \va^{(i)}[k] \geq \sum_{j=1}^{|\K|} d^{(1)}_{j} \prod_{i=2}^{n} d^{(i)}_{\sigma^{(i)}_j}.
  \]
  \label{lemma:many-cw-ds}
\end{lemma}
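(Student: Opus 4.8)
The plan is to mirror the proof of Lemma~\ref{lemma:cw-ds} for two-way joins, promoting the single permutation to a family of $n-1$ permutations anchored at relation $1$. First I would restrict attention to the set $\K$ of joining keys and, for each relation $i \in [n]$, form the non-decreasing subsequence $(\hat{a}^{(i)}_j)_{j=1}^{|\K|}$ consisting exactly of the values $\{\va^{(i)}[k] : k \in \K\}$ sorted ascending. Since every $k \in \K$ is a joining key, the condition $\prod_{i} \va^{(i)}[k] \neq 0$ forces each factor $\va^{(i)}[k]$ to be strictly positive, so all the $\hat{a}^{(i)}_j$ are positive, as are the relevant $d^{(i)}_j$; this positivity is precisely what lets us multiply inequalities term-wise later.

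Next I would fix relation $1$ as the reference ordering: enumerate $\K = \{k_1, \ldots, k_{|\K|}\}$ so that $\va^{(1)}[k_1] \leq \ldots \leq \va^{(1)}[k_{|\K|}]$, hence $\va^{(1)}[k_j] = \hat{a}^{(1)}_j$. For each $i \in [2, n]$, the list $(\va^{(i)}[k_j])_{j=1}^{|\K|}$ is a reordering of the sorted subsequence $\hat{a}^{(i)}$, which defines a permutation $\sigma^{(i)} : [|\K|] \to [|\K|]$ with $\va^{(i)}[k_j] = \hat{a}^{(i)}_{\sigma^{(i)}_j}$. Summing the products over $\K$ in this reference order then yields the exact identity
\[
  \sum_{k \in \K} \prod_{i=1}^{n} \va^{(i)}[k]
  = \sum_{j=1}^{|\K|} \hat{a}^{(1)}_j \prod_{i=2}^{n} \hat{a}^{(i)}_{\sigma^{(i)}_j}.
\]

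The core estimate is the sub-multiset monotonicity already used implicitly in the two-way case: since $\hat{a}^{(i)}$ collects a sub-multiset of the full degree sequence $d^{(i)}$ and both are sorted ascending, the $j$-th smallest of the subsequence dominates the $j$-th smallest of the full sequence, i.e. $\hat{a}^{(i)}_j \geq d^{(i)}_j$ for every $j$. (The one-line argument: the first $j$ entries of $\hat{a}^{(i)}$ give $j$ values $\leq \hat{a}^{(i)}_j$, all of which belong to $d^{(i)}$, so $d^{(i)}$ has at least $j$ entries $\leq \hat{a}^{(i)}_j$, whence $d^{(i)}_j \leq \hat{a}^{(i)}_j$.) Applying this bound at index $j$ for relation $1$ and at index $\sigma^{(i)}_j$ for each $i \geq 2$, and invoking positivity to preserve the inequality under the product, gives
\[
  \hat{a}^{(1)}_j \prod_{i=2}^{n} \hat{a}^{(i)}_{\sigma^{(i)}_j}
  \geq d^{(1)}_j \prod_{i=2}^{n} d^{(i)}_{\sigma^{(i)}_j}
\]
term-wise; summing over $j$ and chaining with the identity above closes the proof.

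The genuinely new work relative to Lemma~\ref{lemma:cw-ds} is organizational: I expect the main obstacle to be the careful bookkeeping of the $n-1$ permutations $\sigma^{(i)}$ and the deliberate choice to leave relation $1$ unpermuted, which is exactly what makes the target expression $d^{(1)}_j \prod_{i=2}^{n} d^{(i)}_{\sigma^{(i)}_j}$ asymmetric in the first index. No rearrangement-style global optimization is needed here—the permutations arise automatically from sorting against the reference relation, and the only analytic ingredients are the monotonicity fact together with positivity of the degrees on $\K$.
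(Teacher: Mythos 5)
Your proposal is correct and follows exactly the route the paper intends: the paper's own proof of this lemma is a one-line deferral to the two-way case (Lemma~\ref{lemma:cw-ds}), and your argument is the faithful expansion of that generalization, anchoring the ordering at relation $1$, defining the $n-1$ permutations from sorting against that reference, and applying the same sub-multiset monotonicity fact $\hat{a}^{(i)}_j \geq d^{(i)}_j$ term-wise with positivity. No gaps; if anything, your writeup supplies the bookkeeping the paper leaves implicit.
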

\begin{proof}
  The proof is a simple generalization of that of Lemma~\ref{lemma:cw-ds}.
\end{proof}

The result tells us that we can move away from degree vectors and instead work solely with the prefixes (of length $m$) of the corresponding degree sequences. This enables the application of all reverse inner-product inequalities described in the preliminaries, in particular the generalized reverse H\"older's inequality, Ineq.~\eqref{ineq:gen-rev-hoelder}.

\subsection{Min-Degree Bound}\label{subsec:min-degree}

As pointed out in our first example, we can use Ineq.~\eqref{ineq:min-degree} to have what we term the \texttt{min-degree} bound, which is reminiscent of the \texttt{max-degree} bound in \panda{}~\cite{panda}. Compared to the other reverse inequalities, it does not require the $\ell_2$ norm, which is not available by default on database catalogs.

The key idea is that, once we have the lower bound on the number of distinct joining keys, $m$, we can fix a relation $R_i$, and assume that all other relations join with $\ell_{-\infty}$ keys each. Formally, this reads as follows: \[
  \displaystyle\max_{i} \ell_{R_i, X, 1} \displaystyle\prod_{j \neq i} \ell_{R_j, X, \ell_{-\infty}}.
\]
Note that norm values are taken, as usual, from the prefix of length $m$ of the corresponding degree sequences.

\subsection{Heavy Partition}\label{subsec:heavy-partition}

For a two-way join, frameworks for upper bounds on query sizes, such as \lpbound{} and \safebound{}, tend to ignore which keys join from the two tables. And, indeed, this is a pragmatic approach, since keeping track of the joining keys themselves is expensive. In our experiments, however, we saw that lower bounds cannot be competitive enough if we do not maintain, at least, a small set of heavy hitters that are going to be joined. Hence, apart from the (light) partitions we will be introducing later (\secref{subsec:light-partitions}), we will maintain on each table a set of heavy hitters, which we refer to as the \emph{heavy partition}. For a given join pool (recall \secref{sec:prelims}), we maintain a set $\mathcal{H}$ that represents the heavy hitters across the corresponding join columns. We keep this set constant across all supported predicate types, and limit its cardinality to $\mathcal{H}$ to a small constant (set in our experiments to 64). To this end, we used the \texttt{FrequentItems} sketch~\cite{frequent-items} available in Apache's DataSketches library, with a precision of 12, and extracted the heavy keys without false positives, along with a (hard) lower bound on their degrees; this guarantees a single-pass computation.

At estimation time, the heavy partition is tracked of separately, namely: we track which heavy keys survive the join, and then collect their exact degrees (or lower bounds thereof) from the pre-computed statistics. We discuss how to compute these statistics in \secref{subsec:pred-and-heavy}. For non-selective queries, integrating a heavy partition into \xbound{} enabled almost perfect estimates (only some constant factors away), while preserving the guarantee of hard lower bounds.

Note that Fabric DW already provides a way to keep track of a column's heavy hitters and their (approximate) degrees, by employing a \texttt{CountMinSketch}~\cite{count-min-sketch}. Hence, \xbound{}'s heavy partition can be supported without much extra overhead.
\section{Base Table Statistics}\label{sec:base-stats}

The previous section enabled the application of reverse inner-product inequalities to the setting of degree sequences, and implicitly the norms associated with them.
The remaining open question is how to provide these norms to the inequalities.
On a high level, given the prefix length $m$, we need to extract the required norms from the degree sequences.
In database terminology, we need prefix norms with random access.

\sparagraph{Prefixes.} The easiest solution is to maintain the norms on prefixes of length a power of two. Indeed, this is resemblant to the prefix trick in \lpbound~\cite[Sec.~5; Optimizations]{lpbound}.
\emph{However}, note that this optimization in the context of \lpbound{} is simply a booster for the quality of base table norms it returns;
indeed, one could always return the norms of the whole table.
For \xbound{}, \emph{unfortunately}, this is an indispensable component to be able to provide any lower bounds.

An attentive reader would note that simply using the \emph{prefix trick} does not actually lead to the \emph{exact} value of the wished-for norms.
Surprisingly, we do \emph{not} need the actual values.
Note that we can further lower-bound the RHS of any reverse inner-product inequality by providing hard lower and upper bounds on the requested norms. For instance, to lower bound the RHS side of the P\'olya--Szeg\H{o} inequality,
we need (i) lower bounds on $\ell_2$, $\ell_{\infty}$, and $\ell_{-\infty}$, and (ii) upper bounds on $\ell_{\infty}$ and $\ell_{-\infty}$.
We could infer these lower bounds on the norms by taking a look at the immediate prefixes sizes around $m$, the provided prefixed length.
Indeed, for $m$ = 6, we could take as a lower bound for $\ell_2$ the value of length 4,
while for an upper bound of $\ell_{\infty}$, we could take the value of the prefix length 8, as the degree sequence is sorted in non-decreasing order.
While this suffices for correctness, i.e., these are hard lower and upper bounds on the requested norms, there is a way how to further boost their quality.
We call this trick \emph{norm stitching}.

\subsection{Norm Stitching}\label{subsec:norm-stitching}

% https://docs.google.com/presentation/d/1F8c7GafCX_E8ntoBI7PBqQUg1f8-87MMvHK2ue6Xnbg/edit?usp=sharing
\begin{figure}
    \centering
    \includegraphics[width=0.5\linewidth]{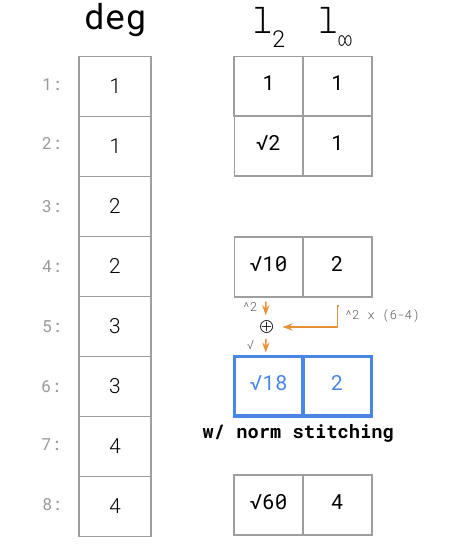}
    \caption{Norm stitching: Deriving $\ell_2$ lower bounds for non-power-of-two prefixes by extending the previous power-of-two prefix using its $\ell_\infty$ value.}
    \label{fig:norm-stitching}
\end{figure}

The idea is to ``stitch'' the norm values at the prefixes of length power of two.
Consider Fig.~\ref{fig:norm-stitching}, which visualizes the idea behind norm stitching. We would like to extract the norm values for $m = 6$.
Yet, we only have the values at prefixes 4 and 8. To this end, to obtain a lower bound on $\ell_2$, as requested by the P\'olya--Szeg\H{o} inequality,
we can take the $\ell_2$-value at 4, and artificially add (constant) degree values that are guaranteed to be below the actual ones.
In this case, we add the degree value $2$ twice, since that is the maximum degree up until prefix length $4$,
and we know that the following degree values will be \emph{at least} that.
We thus obtain: $(\sqrt{10})^2 + 2^2 \times (6 - 4) = 18$. By taking the root, we obtain a lower bound on the actual $\ell_2$ value at prefix 6, which is $\sqrt{28}$.

In pseudo-code, norm stitching for $\ell_2$ works as follows:
~\\
\begin{lstlisting}[label=l:norm-stitching,language=Python]
 def NormStitching(norms, rel_name, m):
   deg_seq_len = 2**i s.t. 2**i <= m
   l_2 = norms[rel_name][deg_seq_len]['l_2']
   l_infty = norms[rel_name][deg_seq_len]['l_+infty']
   return math.sqrt(
     l_2**2 + (m - deg_seq_len) * l_infty**2
   )
\end{lstlisting}

Namely, given the prefix length $m$, we fetch the $\ell_2$ value for the \emph{largest} stored prefix (in our case, only power-of-two prefix lengths) as well as its corresponding $\ell_\infty$ value. We then stitch the $\ell_2$ norm by adding as many $\ell_\infty$ contributions as needed to account for the gap.

\sparagraph{Downwards Stitching.} Notably, it is also possible to do a ``downwards'' norm stitching: Instead of padding the degree sequence from the preceding power of two, one can pad it from the subsequent power of two. Namely, to obtain a lower bound on $\ell_2$ at prefix 6, one can deduct as many $\ell_\infty$'s as the gap. In this particular example, we would obtain: $\sqrt{60}^2 - 4^2 \times (8 - 6)$ = 28, and by taking the root we obtain, in this particular case, the actual value of the norm. \xbound{} implements both the upwards and downwards stitching strategies, returning the larger of the two resulting bounds.

We now come to the last component of \xbound{} that enhances the quality of the lower bounds on the number of joining keys.

\subsection{Light Partitions}\label{subsec:light-partitions}

Lower bounds on the number of joining keys based only on simple statistics, such as the $\ell_0$-value of each column, and the respective min/max values,
can be rather loose. This holds also for the extension to probabilistic $\ell_0$ in lower bounds outlined in \secref{subsec:prob-l0}. To this end, we introduced in \secref{subsec:heavy-partition} the idea of a heavy partition, which is targeted at heavy hitters. For the rest of the keys, one solution is to zoom in on the range value to understand where the joining keys actually occur. Hence, we introduce a lightweight partitioning trick that first splits the value range of each column into a fixed number $B$ of equi-width bins,
and collects the same statistics as in ~\secref{subsec:exact-l0}. We refer to these as the \emph{light} partitions. Note that the partitions for which we collect min/max values can be independent of the actual physical table partitions. To exemplify this, note that the light partitions can also be done a via an arbitrary hash function when using the probabilistic lower bounds; the hard lower bounds do need the min/max statistics, as explained \secref{subsec:exact-l0}.

Let us outline the benefits this approach.
First, collecting $\ell_0$ statistics within the $B$ ranges is lightweight (for small $B$'s). Second, this provides us with more fine-grained lower bounds on the number of joining keys, leading to better lower and upper bounds on the norm values that are plugged into the inner-product inequalities, especially when using norm stitching (\secref{subsec:norm-stitching}).

\subsection{Predicate Types}

The above considerations hold for non-filtered base tables. We next show how to still provide lower bounds for the most common predicate types.

\subsubsection{Equality Predicates.}\label{subsubsec:equality-preds} Similar to \lpbound{}, we consider the most common values (MCVs) of a given predicate column, and build the exact same norms as discussed previously.
Concretely, for an MCV $a$ on column $A$ of relation $R$, we compute the set of $\ell_p$ norms needed in the reverse inequalities on top of the degree sequence $\text{deg}_R(\ast\:|\:X_i, A = a)$, for all join columns $X_i$ of relation $R$.

At query time, for a predicate $A = v$, if $v$ is an MCV, then we simply take the corresponding norms (boosted with norm stitching, eventually). However, if $v$ is \emph{not} an MCV, we simply return 0-valued norms.

\subsubsection{Range Predicates.}\label{subsubsec:range-preds} We borrow the hierarchical histogram of \lpbound{}, yet adapt it for lower bounds, namely: Each layer is a histogram whose number of buckets is half the number of buckets of the histogram at the layer below and covers the entire domain range of the predicate column $A$. For a histogram bucket with boundaries [$x_i$, $y_i$], we compute the $\ell_p$-norms on $\text{deg}_R(\ast\:|\:X_i, A \in [x_i, y_i])$. 

At runtime, to fetch the bounds for a range predicate $A \in [x, y]$, we select (i) for the lower bound, the largest bucket that \emph{is contained} in $[x, y]$, and (ii) for the upper bound, the smallest bucket that \emph{contains} $[x, y]$.  However, we observed that, on large datasets, supporting the prefixed norms in this case adds a considerable overhead when pre-computing the statistics. This is due to the fact that we did not exploit the overlaps naturally found in the range buckets. As a tradeoff, we chose the take at estimation time, as a valid lower bound, solely the $\ell_0$ lower bound on the corresponding bucket, without relying on the $\ell_p$ norms.

\subsubsection{Conjunctions \& Disjunctions.}\label{subsubsec:conj-and-disj}

Let us start with disjunctions, i.e., \texttt{pred$_1$} \textcolor{blue}{\texttt{or}} \texttt{...} \textcolor{blue}{\texttt{or}} \texttt{pred$_p$}, since they are more intuitive. In \lpbound{}, these are estimated by taking, due to Minkowski's inequality, the sum of the individual norms. Notably, in \xbound{}, since at least one of the disjuncts should qualify, we can simply take the maximum $\ell_0$ in the first step of lower-bounding the number of joining keys, $m$. Once $m$ has been reported, we retrieve the norm vectors for all predicates and compute their coordinate-wise maximum; that is, for each norm we take the largest value among the predicates.

Conjunctions, i.e., \texttt{pred$_1$} \textcolor{blue}{\texttt{and}} \texttt{...} \textcolor{blue}{\texttt{and}} \texttt{pred$_p$}, are slightly more delicate, as they can more easily yield a zero lower bound. Since we have to first lower-bound the number of joining keys, $m$, we can treat the conjuncts as individual predicates, and report their $\ell_0$ to the first step of lower-bounding the number of joining keys, $m$. Once we have determined $m$, we will do the same strategy as for disjunctions, as it is guaranteed that, for the given $m$ keys, all individual conjuncts evaluate to true.

We continue with the discussion on the support for other predicate types in \secref{sec:discussion}. We next show how to support the above predicates when partitioning (\secref{subsec:light-partitions}) is enabled.

\sparagraph{Predicates $\times$ Partitioning.} To integrate with the partitioning trick, one has to extend the support for the above predicates, by performing the very same trick as described in \secref{subsec:light-partitions} for each MCV and histogram bucket, respectively. We stress that fact that we store per partition solely the statistics related to the distinct key count (or the Theta sketches, in the case of the probabilistic $\ell_0$ lower bounds) and the min/max values of the key.

Note that we do \emph{not} store any $\ell_p$-norms. It is indeed possible to extend \xbound{} to do so---with presumably better lower bounds, yet this adds another layer of norms, making the computation of statistics computationally intensive. Since 

\subsection{Predicates $\times$ Heavy Keys}\label{subsec:pred-and-heavy}

As a booster for our lower bounds, we introduced in \secref{subsec:heavy-partition} the concept of a heavy partition: heavy hitters appearing on the join columns. In the following, we outline how they are supported in combination with the predicates. The statistics are stored the \texttt{hh\_stats} table. First, in the case of no predicates, we store per heavy join key its degree (frequency) on each table. Equality predicates are supported via the MCV support: For each MCV on the predicate column, we store in \texttt{hh\_stats} the degrees of the heavy keys; similarly for the range predicates, where we store the degree w.r.t.~to a range bucket.

\sparagraph{Multiple Predicates.} While this yields exact lower bounds when having table scans with at most one predicate, the case of multiple predicates incurs a challenge. Namely, we have to lower bound the degree of a heavy key under two or more predicates, which, without any additional information, can only be set to 0. While this can be well pre-computed for two predicates, e.g., storing the degrees under an MCV and a range bucket, this can be quite of an overhead. Instead, we opted for a probabilistic approach: per predicate type, we can store a \texttt{ThetaSketch} on the tuple-id in \texttt{hh\_info}, and, at estimation time, lower-bound, with a given confidence, the intersection size of the sets represented by the sketches. Indeed, this gives a probabilistic lower bound on the number of tuples that match all predicates, i.e., on the degree of the respective heavy key. Notably, applying this trick for all heavy keys weakens the guarantees of the lower bounds. We plan to investigate the resulting confidence of these probabilistic lower bounds.
% src/xbound/notebooks/<conference>/andreas-plot.ipynb
\begin{figure*}
    \centering
    \includegraphics[width=1.0\linewidth]{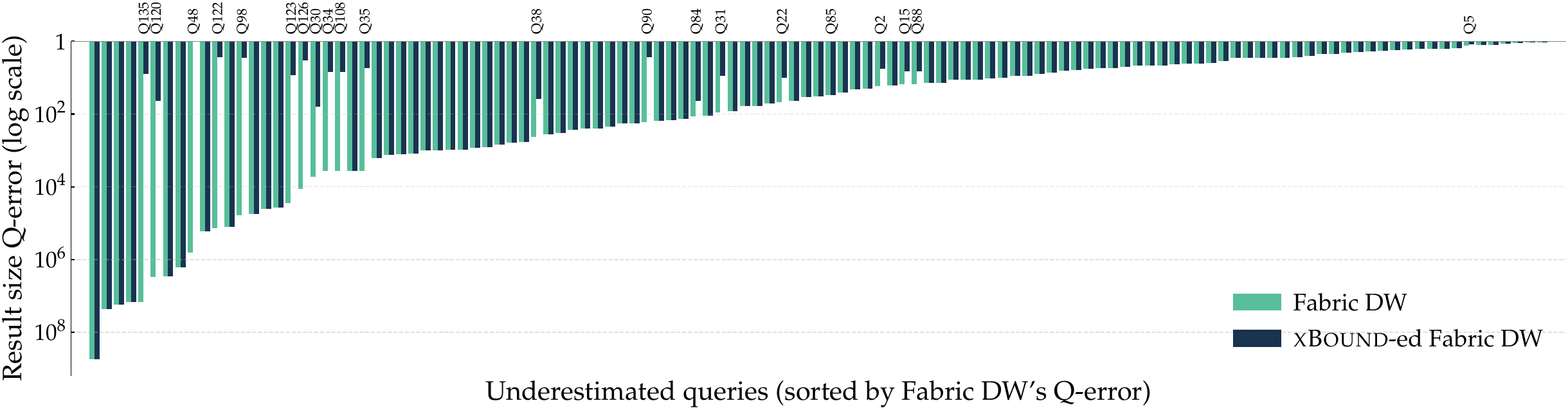}
    \caption{Reducing Fabric DW's query underestimates via clipping with \xbound{}'s lower bounds on the SO-CEB benchmark. We show the queries in order of their Fabric DW's Q-error and highlight the queries with improvements.}
    \label{fig:card-q-error}
\end{figure*}

\section{Experiments}\label{sec:experiments}

We now empirically evaluate the effectiveness of \xbound{} in reducing cardinality underestimation in a real production system, Fabric DW. Particularly, we demonstrate that clipping Fabric DW's estimates with \xbound{}'s lower bounds directly improves resource estimation, yielding sizable query speedups.

\sparagraph{Benchmark.} Since Fabric DW is a cloud-based distributed data warehouse, we need a dataset large enough to exercise its full capability. We therefore run the \textsc{STATS-CEB} benchmark~\cite{stats-ceb} on the 220 GB StackOverflow dump~\cite{sqlstorm}, ensuring distributed operators appear in our system's query plans; we refer to this as \textsc{StackOverflow-CEB} (\textsc{SO-CEB}). Note that we remove filters on the \texttt{posts.favoritecount} column, as it is \texttt{NULL} in this dump. The benchmark comprises 146 Berge-acyclic queries over up to 7 tables with equality and range predicates on numeric and date columns.\footnote{We consider their simplest form: \texttt{select count(*) from [...].}} Of these, \xbound{} currently supports 89~/~146 queries whose (mostly FK-FK) joins share a single join key; specifically, they join either on \texttt{userid} or \texttt{postid}, including semantically equivalent attributes such as \texttt{posts.owneruserid} and \texttt{postlinks.relatedpostid}.

\sparagraph{System Configuration.} The experiments on Fabric DW are conducted on Microsoft Fabric with a F64 capacity. The standalone statistics required by \xbound{} are built on a single node equipped with an Intel® Xeon® Gold 5318Y CPU (24 cores, 48 hyper-threads) processor and 128 GB DDR4 main memory, running Ubuntu 24.04. To this end, we use DuckDB v1.4.3.

Unless otherwise specified, \xbound{} uses the following default configuration: 64 heavy keys, 16 light partitions, 1024 MCVs for equality predicates, and 128 histogram buckets for range predicates. We activate the following optimizations: norm stitching (\secref{subsec:norm-stitching}), heavy keys (\secref{subsec:heavy-partition}), which includes the extension to multiple predicates (\secref{subsec:pred-and-heavy}).

\subsection{Reducing Underestimation}

Fabric DW's resource estimator uses the input data volume as a proxy for estimating how many CPUs are required by the query. The data volume of an operator is computed by the number of rows times the average row size. To this end, we observed that the last join operator in the plan accounted for most of the underestimation. Since the row size, in the case of \texttt{select count(*)} queries, is just a constant, we directly report the estimation error on the number of rows.

Fig.~\ref{fig:card-q-error} shows the Q-errors of the last join, whose cardinality is the query result size itself, achieved by DW and its \xbound{}-ed variant (clipping the original estimates with the corresponding lower bounds).

We observe that (i) Fabric DW underestimates 81.5\% of queries, with Q-errors reaching $5.6 \cdot 10^8$, and (ii) its \xbound{}-ed version reduces the median Q-error on these underestimates by 3.2x, and the Q-error's P90 by 35.8x (it is mostly the heavily underestimated queries that \xbound{} corrects). Since \xbound{} can only reduce underestimation, the (mild) 18.5\% overestimates of Fabric DW in this benchmark still remain.

These gains are not specific to Fabric DW. In the introductory Fig.~\ref{fig:cardinality-misestimation}, \xbound{} reduces the Q-error of underestimates by 8.38x in DuckDB and by 2.30x in PostgreSQL on \textsc{StackOverflow-CEB}. On the \textsc{JOBlight} benchmark, only DuckDB’s median Q-error could be improved, namely by 2.45x.

\subsection{Query Speedups}

Correcting cardinality underestimation directly increases estimated resources for query execution in Fabric DW---thus potentially improves query execution time. Fabric DW estimates the amount of work per query stage based on the cardinalities of the contained operators (among other). Correcting underestimates via \xbound{} translates to more CPU resources with a proportional node count rounded up to full integer---for a given query stage.

Fig.~\ref{fig:query-speedup} shows queries with execution speedups over default Fabric DW when using resource estimates based on (i) \xbound-ed cardinalities and (ii) true cardinalities. We warm up each query and then measure the time of a second query execution, limiting executions to 2h timeouts. Queries are ordered consistently with Fig.~\ref{fig:card-q-error} and can be matched by query number. A significant number of queries show no speedup from using actual cardinalities, since doing so increases total CPU resources only minimally. This is expected based on our initial observation of long-tail CPU underestimation. However, \xbound{} achieves significant speedup for vastly underestimated queries: 20.1x for \textsc{Q90} and 3.2x for \textsc{Q126}. These are queries that are significantly underestimated and it shows promise that \xbound{} may also fix the extreme CPU underestimation observed in production.

\subsection{Ablation Study}\label{subsec:ablation}

We study the effects of the several optimizations introduced throughout the past sections. We study the number of partitions and the precision of the Theta sketches on the two components of \xbound{}: the $\ell_0$ and the statistics of the heavy partition. Notably, each increase in the two values leads to an increased space overhead. The numbers for these have been already analyzed in \secref{subsec:stats-overhead}.

We show the effect of these optimizations on the Q-error achieved by \xbound{}'s lower bounds on the SO-CEB benchmark in Fig.~\ref{fig:ablation-study}. The two main observation are that the heavy partition contributes significantly to the quality of the lower bounds, and that increasing the precision of Theta sketch, especially for the $\ell_0$ statistics, has a better effect than increasing the number of partitions. Having a high-precision Theta sketch for heavy partition is important as well: changing the precision from $8$ to $12$ in this case improves the median Q-error from $28$ to $21$.

\subsection{Statistics Overhead}\label{subsec:stats-overhead}

% src/xbound/notebooks/<conference>/speedup-plot.ipynb
\begin{figure}[t]
    \centering
    \includegraphics[width=1.0\linewidth]{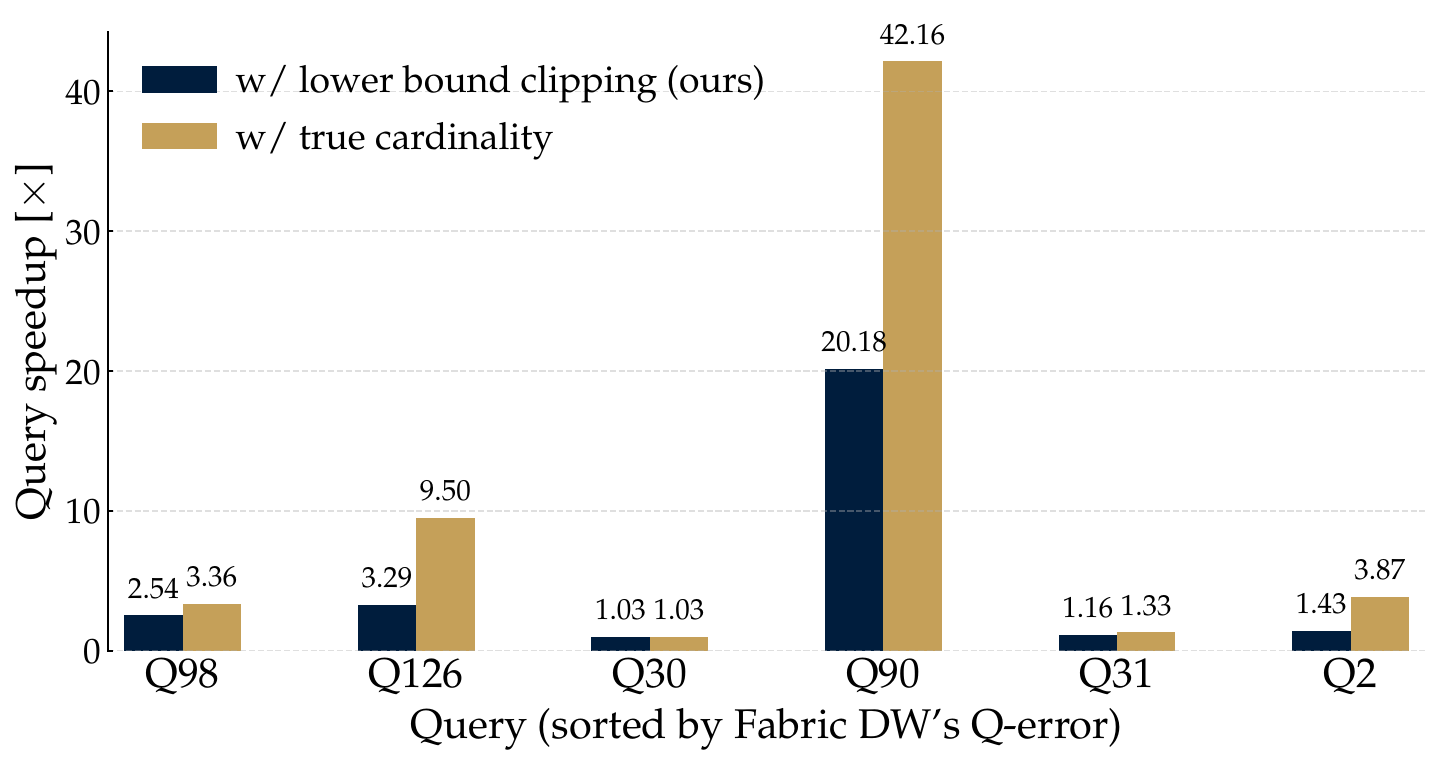}
    \caption{End-to-end query speedups on Fabric DW on the SO-CEB benchmark queries. We exclude queries with no observed speedup.}
    \label{fig:query-speedup}
\end{figure}

Depending on the configuration, the statistics used in \xbound{} can be considered lightweight. We report in Tab.~\ref{tab:stats-overhead} the Parquet sizes\footnote{Using the DuckDB Parquet writer, when enabling \texttt{zstd} compression.} and the time to compute all main statistics tables used in \xbound{}: (i) the (partitioned) $\ell_0$-statistics stored in \texttt{l0\_stats}, (ii) the $\ell_p$-norms stored in \texttt{lp\_stats}, and (iii) the statistics related to the heavy partition, in \texttt{hh\_stats}.

\sparagraph{Space \& Build.} The overhead of the partitioned $\ell_0$ statistics highly depends on the number of partitions, the configuration of the Theta sketch, and of the number of join and predicate columns; the SO-CEB benchmark features 13 join columns (and equality- and range-predicates on 11 and 16 columns, respectively).

In our case, choosing 16 partitions and a precision of 8 for \texttt{l0\_stats} results in a total size of 67.0\,MB. These statistics can be built in 6.2 min. Reducing the precision to 5, which is the minimum allowed in the DataSketches library, reduces the overhead to only 13.3\,MB. In contrast, the $\ell_p$ norms, as they do not depend on the number of partitions, take 0.2 MB. The statistics related to the \texttt{hh\_stats} can be built fast, yet, since they require high-precision sketches (see the ablation in study in \secref{subsec:ablation}), can be rather heavyweight. A precision of 12 for this set of statistics (including the support for multiple predicates, as in \secref{subsec:pred-and-heavy}) leads to 129.6 MB.

\sparagraph{Estimation Time.} Given a query on $n$ tables, we perform one Theta sketch intersection over $n$ sketches to obtain the number of distinct join keys (including a lower-bound estimate); similarly, for each of the 64 heavy keys. Our unoptimized Python prototype produces one estimate for a query in under 70 ms, with overhead largely due to \texttt{pandas}' operations on data frames (including DuckDB SQL queries to fetch the sketch-based lower bounds). We expect optimized implementations to reach <1ms estimation time.

% robust_memory_estimation/src/xbound/notebooks/<conference>/ablation.ipynb
\begin{figure}[t]
    \centering
    \includegraphics[width=1.0\linewidth]{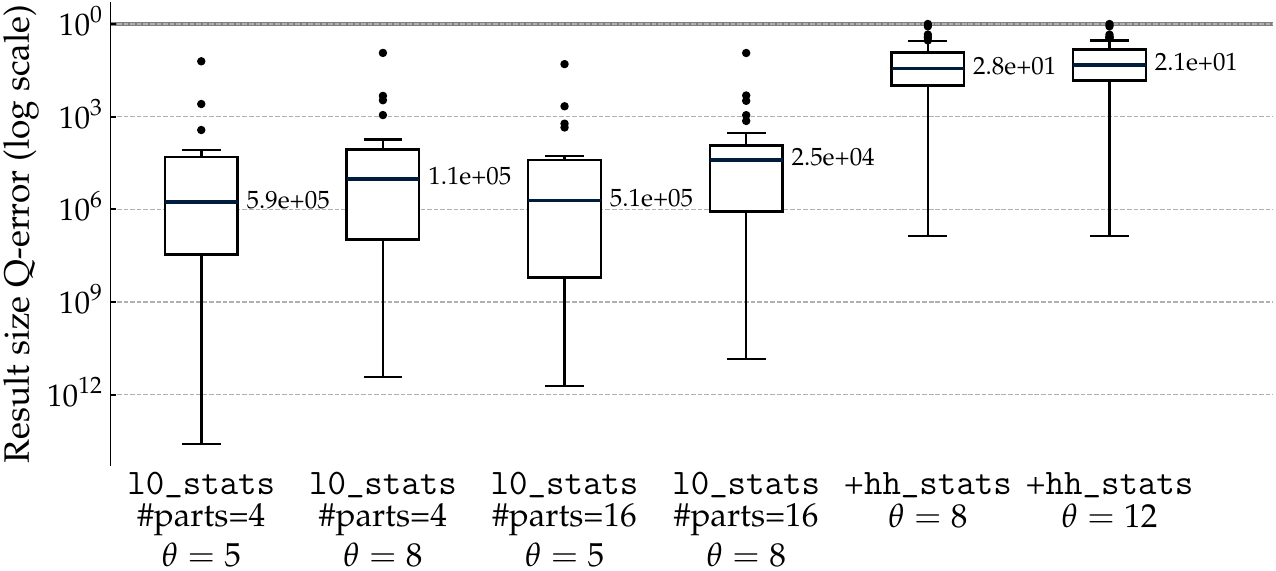}
    \caption{Quantifying the effect of \xbound{}'s components (partitioned $\ell_0$ and heavy keys statistics) and their parameters (\texttt{\#parts}: number of light partitions; $\theta$: \texttt{ThetaSketch} precision) on the Q-error of the result-size lower bounds for the supported SO-CEB queries.}
    \label{fig:ablation-study}
\end{figure}

\section{Discussion and Open Problems}\label{sec:discussion}

Having demonstrated the potential of cardinality lower bounds against the persistent \emph{underestimation} problem, we now outline avenues for generalization, hoping to inspire both practitioners and theorists to further pursue this direction.

\sparagraph{Query Types.} Currently, \xbound{} supports only acyclic queries whose joins share a single join key, a condition that inherently holds for two-way joins (e.g., the first join in a query plan) but is restrictive for multi-key joins common in practice. Moreover, while this type of acyclicity is prevalent in established benchmarks such as \textsc{JOB}~\cite{job} and \textsc{StackOverflow-CEB}~\cite{stats-ceb}, it is far less so in enterprise workloads. Extending the framework to cyclic queries, such as those in the \textsc{Subgraph-Matching} benchmark~\cite{sm-bench}, is an interesting open direction.

Another intriguing extension is to nested subqueries. While this remains open for \lpbound{}, the reverse inequalities underlying \xbound{} can be generalized to yield lower bounds on norms beyond $\ell_1$, in particular $\ell_2$, e.g., via the $\ell_4$-norms of the input vectors, enabling propagation to subsequent subexpressions.

\sparagraph{Beyond Inner Joins.} \xbound{} is able to support non-inner joins to handle a more diverse workload. The framework is flexible enough for this: for example, semi-joins $A \ltimes B$ can be supported by using the lower bound on the number of distinct join keys (\secref{subsec:exact-l0}, \secref{subsec:prob-l0}) as the prefix length when fetching the $\ell_1$ norm of $A$. For outer joins, one can always return the inner-join lower bound; tighter bounds would require information about non-joining keys, which we leave as future work.

% robust_memory_estimation/src/xbound/notebooks/vldb26-industrial/overhead.ipynb
\newcommand{\colspace}{6pt}

\begin{table}[t]
\caption{Space and build overhead (16 partitions). We vary the \texttt{ThetaSketch} precision ($\theta$) used by the statistics related to $\ell_0$ (\texttt{l0\_stats}) and the heavy partition (\texttt{hh\_stats}).}
\label{tab:stats-overhead}
\centering

\begin{tabular}{
l l c
@{\hspace{\colspace}} r
@{\hspace{\colspace}} r}
\toprule

Table & Purpose & $\theta$ & Size [MB] & Build [s] \\

\midrule

 \multirow{2}{*}{\texttt{l0\_stats}}
  & \multirow{2}{*}{Partitioned $\ell_0$-stats}
  & 5  & 13.3 & 360.1 \\
  &  & 8  & 67.0 & 373.9 \\

  \addlinespace

  \texttt{lp\_stats}
  & $\ell_p$-norms 
  & -- & 0.2 & 163.0 \\

  \addlinespace

  \multirow{2}{*}{\texttt{hh\_stats}}
  & \multirow{2}{*}{Heavy partition stats}
  & 8  & 22.6 & 18.1 \\
  &  & 12 & 129.6 & 26.5 \\
  
\bottomrule
\end{tabular}
\end{table}

Another relevant operator is group-by, for which approximate estimates exist via sketches~\cite{every-row-counts} or learning~\cite{filtered-groupby}, as well as upper bounds~\cite{lpbound}.

\sparagraph{Predicate Support.} While we support standard predicate types (equality, range, conjunctions, and disjunctions), providing good lower bounds for string predicates such as \texttt{LIKE} and \texttt{REGEX} remains open. A simple approach is sampling: the number of qualifying tuples in a sample is a (loose) lower bound, provided we do not extrapolate to the full dataset. To generate the norms needed for joins, we must lower-bound the distinct count of the join key (as with MCVs, \secref{subsubsec:equality-preds}); whether one can do better is an open question. We note that cardinality \emph{upper bounds} can be derived via tri-grams, as hinted in \safebound{} and \lpbound{}, but extending such techniques to lower bounds remains unexplored.

\sparagraph{Theory Advancements.} As the reader may have observed, \xbound{} relies on a limited set of statistics: $\{\ell_0, \ell_1, \ell_2, \ell_{\infty}\}$, $\ell_{-\infty}$, and (partitioned) min/max values, dictated by the class of inequalities we currently use. We believe it is possible to design reverse inequalities over a wider range of norms, as \lpbound{} does; any such inequality could be readily plugged into \xbound{} once the corresponding norms are available.

\sparagraph{Deteriorating Lower Bounds.} Our exact lower bounds rely on min/max statistics (zonemaps~\cite{zonemaps}), readily available in database catalogs and modern file formats. While this simplicity suits real systems, more granular statistics about the \emph{value} distribution (beyond the \emph{degree} distribution already captured by $\ell_p$-norms) could yield tighter bounds. Their role in \xbound{} is tied to the quality of lower bounds on the number of distinct join keys; ultimately, the goal is to locate joining keys more precisely, mitigating the deterioration of lower bounds as the number of joins increases.

\sparagraph{Probabilistic Lower Bounds.} To be able to run \xbound{} on large datasets, we had to resort to probabilistic lower bounds. This affected the $\ell_0$ statistics (computed at 99\% confidence) and multi-predicate support for heavy partitions. Notably, the inner-product reverse inequalities remained exact. An open question is understanding the joint guarantee when intersecting predicate-level Theta sketches for all heavy keys: while each key's degree is taken at 99\% confidence, the overall guarantee degrades as these operations are performed jointly across all heavy keys.

\sparagraph{Data Updates.} Lower bounds have an intriguing property regarding updates: inserts are supported by default, since a lower bound on $|A \Join B|$ remains valid when $\Delta_A$ is appended to $A$, unlike upper bound estimators such as \lpbound{}, where new data immediately renders degree norms stale. For incremental maintenance, we observe that $|(A + \Delta_A) \Join B| = |A \Join B| + |\Delta_A \Join B|$, as the join size is an inner product. Thus, we can reuse the previous lower bound $|A \Join B|^{-}$ and add only $|\Delta_A \Join B|^{-}$ for the new data, periodically merging deltas to limit storage and runtime overhead.
\section{Related Work}\label{sec:rel-work}

While, to the best of our knowledge, \xbound{} is the first estimator for nontrivial lower bounds on join result sizes, \emph{upper bounds} on join sizes have been extensively studied; we summarize the most relevant work at the end of this section. It is also important to distinguish our goal from the literature on fine-grained lower bounds for conjunctive queries~\cite{mengel-tut, embedding23, circuit24}. Those results show that every algorithm evaluating a given class of conjunctive queries must incur at least a certain complexity under specific computational assumptions. In other words, they yield asymptotic lower bounds, whereas our focus is on data-dependent lower bounds on join cardinalities.

\sparagraph{Learned Estimators.} Apart from the provable estimators in the last two decades, there has been a long line of research on learned cardinality estimation~\cite{joblight, naru, deepdb, factor-join}; see, for instance, the recent analysis by Heinrich et al.~\cite{heinrich-survey}. However, these estimators do not come with guarantees on their hard estimates, and can lead to arbitrarily mis-estimations when faced with complex workloads~\cite{job-complex}. An exception is FactorJoin~\cite{factor-join}, which, indeed, provides probabilistic upper bounds.

\sparagraph{Sketch-Based Estimators.} Seeing the join size as an inner product between the two corresponding degree vectors is not something new: Sketch-based estimators~\cite{skimmed-sketch, red-sketch, join-sketch, nicolau-sketch} already use this very observation. Yet, they rely on well-established sketches, which cannot guarantee lower bounds on the actual join size; they solely return an approximation, which could, itself, violate the bounds.

\sparagraph{Robust Query Processing.} The impossibility to solve the cardinality estimation problem let our community transition to a new paradigm, where the cardinality estimator is considered to be of secondary importance. This line of research started with Yannakakis' algorithm~\cite{yannakakis}, and has seen a revival with a series of Dagstuhl seminars~\cite{dagstuhl-1}, and more recently, with concrete implementations in modern database systems~\cite{pt, diamond, yannakakis+, rpt, yannakakis-eng}. 

\sparagraph{Join Size Upper Bounds.} Recall that an upper bound for a conjunctive query $Q$ is a numerical value,
which is computed in terms of statistics on the input database,
and is guaranteed to be greater or equal to the output size of any given query.
The upper bound is tight if there is a database instance, satisfying the statistics, such that the query's output size is as large as the bound.

To this end, the AGM bound~\cite{agm} is a tight upper bound that uses only the cardinalities of the relations;
in other words, it uses only the $\ell_1$-norms of full degree sequences. To upper-bound the output size, the AGM bound uses fractional edges covers and is useful for cyclic queries.

With \panda{}, Khamis, Ngo and Suciu generalize the AGM bound by also employing the $\ell_\infty$-norm~\cite{panda}.
When applied to acyclic queries, their $\texttt{max-degree}$ bound is an improvement over the AGM bound, yet, usually, is less accurate than a density-based estimator.
For instance, to upper-bound the size of $|A \Join_{X = Y} B|$, \texttt{max-degree} uses the following three inequalities (note that the first one is coming from the AGM bound):
\begin{equation*}
\begin{aligned}
  |A \Join_{X=Y} B| &\le |A| \cdot |B|,\\
  |A \Join_{X=Y} B| &\le \|\text{deg}_A(\ast\:|\:X)\|_{\infty} \cdot |B|,\\
  |A \Join_{X=Y} B| &\le |A| \cdot \|\text{deg}_B(\ast\:|\:Y)\|_{\infty}.
\end{aligned}
\end{equation*}
Our \texttt{min-degree} estimator~\secref{subsec:min-degree} takes inspiration from this observation for join size lower bounds.

\textsc{SafeBound}~\cite{safebound-1, safebound-2} uses simple, full degree sequences and computes a tight upper bound of a Berge-acyclic, full conjunctive query.
For example, if $\deg_A(\ast\:|\:X) = (a_1 \geq a_2 \geq \ldots)$ and $\deg_B(\ast\:|\:Y) = (b_1 \geq b_2 \geq \ldots)$,
then \safebound{} returns the following upper bound on the two-way join from above: $|A \Join_{X = Y} B| \leq \sum_{i} a_ib_i$. Indeed, \safebound{} can return a much better bound than the ones above.
However, its bounds are not describable by a closed-form formula; it is only given by an algorithm.

The above limitations of \safebound{} are essentially addressed by \lpbound{}~\cite{lpbound}, which, instead of relying on degree sequences, it employs their $\ell_p$-norms.
First, this eliminates the need to store degree sequences.
Second, by solving a linear program, this returns a closed-form formula that can be easily understood.
\section{Conclusion}\label{sec:conclusion}

Motivated by the pronounced long tail of CPU resource underestimates observed in production, we introduced in this work \xbound{}, a framework for computing cardinality lower bounds. Although it currently supports only inner (and outer) joins on single keys,
preliminary results show that it can correct underestimates and fix end-to-end query execution slowdown in the most painful cases on the challenging open benchmark \textsc{StackOverflow-CEB} in our system, Fabric DW.

We outlined several avenues for generalization, including extending the range of supported query types as well as advancing the underlying theory to further tighten the lower bounds. We hope to spark interest in this problem among both practitioners and theorists.

Whether our systems’ optimizers will completely overcome their Achilles heel in the decade ahead, only time will tell.

\sparagraph{Acknowledgments.} We thank our colleagues Antonio Fernández-Rodríguez, José Alberto López Magaña, and Tej Trivedi for their vital hands-on support for the Fabric Warehouse setup.

% Bibliography.
\bibliographystyle{ACM-Reference-Format}
\bibliography{xbound}

\end{document}